\def\BibTeX{{\rm B\kern-.05em{\sc i\kern-.025em b}\kern-.08em
		T\kern-.1667em\lower.7ex\hbox{E}\kern-.125emX}}
\newtheorem{proposition}{Proposition}
\newtheorem{corollary}{Corollary}
\newtheorem{definition}{Definition}
\newtheorem{lemma}{Lemma}
\DeclareRobustCommand{\stirling}{\genfrac\{\}{0pt}{}}
\DeclareAcronym{FDM}{
	short = FDM,
	long  = frequency-division multiplexing,
	tag = acrs,
}
\DeclareAcronym{OFDM}{
	short = OFDM,
	long  = orthogonal frequency-division multiplexing,
	tag = acrs,
}
\DeclareAcronym{OFDMA}{
	short = OFDMA,
	long  = orthogonal frequency-division multiple access,
	tag = acrs,
}
\DeclareAcronym{LTE}{
	short = LTE,
	long  = long term evolution,
	tag = acrs,
}
\DeclareAcronym{4G}{
	short = 4G,
	long  = fourth generation of mobile network,
	tag = acrs,
}
\DeclareAcronym{5G}{
	short = 5G,
	long  = fifth generation,
	tag = acrs,
}
\DeclareAcronym{SDMA}{
	short = SDMA,
	long  = space division multiple access,
	tag = acrs,
}
\DeclareAcronym{SDMA-OFDM}{
	short = SDMA-OFDM,
	long  = space division multiple access orthogonal frequency-division multiplexing,
	tag = acrs,
}
\DeclareAcronym{NOMA}{
	short = NOMA,
	long  = non-orthogonal multiple access,
	tag = acrs,
}
\DeclareAcronym{MIMO}{
	short = MIMO,
	long  = multiple-input-multiple-output,
	tag = acrs,
}
\DeclareAcronym{MIMO-NOMA}{
	short = MIMO-NOMA,
	long  = multiple-input-multiple-output non-orthogonal multiple access,
	tag = acrs,
}
\DeclareAcronym{CNN}{
	short = CNN,
	long  = convolutional neural network,
	tag = acrs,
}
\DeclareAcronym{LOS}{
	short = LOS,
	long  = line-of-sight,
	tag = acrs,
}
\DeclareAcronym{NLOS}{
	short = NLOS,
	long  = non-line-of-sight,
	tag = acrs,
}
\DeclareAcronym{WINNER}{
	short = WINNER,
	long  = Wireless World Initiative for New Radio,
	tag = acrs,
}
\DeclareAcronym{QuaDRiGa}{
	short = QuaDRiGa,
	long  = Quasi Deterministic Radio Channel Generator,
	tag = acrs,
}
\DeclareAcronym{MT}{
	short = MT,
	long  = mobile terminal,
	tag = acrs,
}
\DeclareAcronym{BS}{
	short = BS,
	long  = base station,
	tag = acrs,
}
\DeclareAcronym{NR}{
	short = NR,
	long  = new radio,
	tag = acrs,
}
\DeclareAcronym{AWGN}{
	short = AWGN,
	long  = additive white Gaussian noise,
	tag = acrs,
}
\DeclareAcronym{SNR}{
	short = SNR,
	long  = signal-to-noise ratio,
	tag = acrs,
}
\DeclareAcronym{ADAM}{
	short = ADAM,
	long  = Adaptive Moment Estimation,
	tag = acrs,
}
\DeclareAcronym{MSE}{
	short = MSE,
	long  = Mean Squared Error,
	tag = acrs,
}
\DeclareAcronym{MAC}{
	short = MAC,
	long  = Media Access Control,
	tag = acrs,
}
\DeclareAcronym{BER}{
	short = BER,
	long  = Bit Error Rate,
	tag = acrs,
}
\DeclareAcronym{WMSN}{
	short = WMSN,
	long  = Wireless Multimedia Sensor Network,
	tag = acrs,
}
\DeclareAcronym{WSN}{
	short = WSN,
	long  = Wireless Sensor Network,
	tag = acrs,
}
\DeclareAcronym{BTC}{
	short = BTC,
	long  = Brain-Type Communications,
	tag = acrs,
}
\DeclareAcronym{IoT}{
	short = IoT,
	long  = Internet of Things,
	tag = acrs,
}
\DeclareAcronym{PW-MAC}{
	short = PW-MAC,
	long  = Predictive-Wakeup MAC,
	tag = acrs,
}
\DeclareAcronym{TDMA}{
	short = TDMA,
	long  = Time-Division Multiple Access,
	tag = acrs,
}
\DeclareAcronym{FDMA}{
	short = FDMA,
	long  = Frequency-Division Multiple Access,
	tag = acrs,
}
\DeclareAcronym{CDMA}{
	short = CDMA,
	long  = Code-Division Multiple Access,
	tag = acrs,
}
\DeclareAcronym{BPSK}{
	short = BPSK,
	long  = binary phase-shift keying,
	tag = acrs,
}
\DeclareAcronym{QPSK}{
	short = QPSK,
	long  = quadrature phase-shift keying,
	tag = acrs,
}
\DeclareAcronym{MPSK}{
	short = MPSK,
	long  = $M$-ary phase-shift keying,
	tag = acrs,
}
\DeclareAcronym{RV}{
	short = RV,
	long  = random variables,
	tag = acrs,
}
\DeclareAcronym{pdf}{
	short = pdf,
	long  = probability density function,
	tag = acrs,
}
\DeclareAcronym{pmf}{
	short = pmf,
	long  = probability mass function,
	tag = acrs,
}
\DeclareAcronym{MAP}{
	short = MAP,
	long  = maximum a posteriori,
	tag = acrs,
}
\DeclareAcronym{MTE}{
	short = MTE,
	long  = meantime between events,
	tag = acrs,
}
\DeclareAcronym{SFC}{
	short = SFC,
	long  = semantic-functional communication,
	tag = acrs,
}
\DeclareAcronym{OOK}{
	short = OOK,
	long  = on-off keying,
	tag = acrs,
}
\pgfplotsset{compat=1.17}
\begin{document}

%
\title{A novel semantic-functional approach for multiuser event-trigger communication}

\author{Pedro E. Gória Silva,
Pl\'inio S. Dester,
Harun Siljak, \textit{Senior Member}, \textit{IEEE},
Nicola Marchetti, \textit{Senior Member}, \textit{IEEE},
Pedro H. J. Nardelli,  \textit{Senior Member},
Rausley A. A. de Souza, \textit{Senior Member}, \textit{IEEE}\thanks{Pedro Gória Silva and Pedro H. J. Nardelli  are with Lappeenranta--Lahti University of Technology, Finland, (email: pedro.goria.silva@lut.fi, pedro.nardelli@lut.fi). 
P. H. J. Nardelli  is also with University of Oulu, Finland. 
Pl\'inio S. Dester is with State University of Campinas (Unicamp), Brazil (email: pliniodester@gmail.com).
Harun Siljak and Nicola Marchetti are with Trinity College Dublin, Ireland (email: harun.siljak@tcd.ie, nicola.marchetti@tcd.ie).
Rausley A. A. de Souza is with National Institute of Telecommunications (Inatel), Santa Rita do Sapucaí 37540-000, Brazil. E-mail: rausley@inatel.br.
}}

\maketitle
\begin{abstract}
%
%

This work introduces a new perspective for physical media sharing in multiuser communication by jointly considering (i) the meaning of the transmitted message and (ii) its function at the end user. 
Specifically, we have defined a scenario where multiple users (sensors) are continuously transmitting their own states concerning a predetermined event. 
On the receiver side there is an alarm monitoring system, whose function is to decide whether such a predetermined event has happened in a certain time period and, if yes, in which user. 
The media access control protocol proposed constitutes an alternative approach to the conventional physical layer methods, because the receiver does not decode the received waveform directly; rather, the relative position of the absence or presence of energy within a multidimensional resource space carries the (semantic) information. 
The protocol introduced here provides high efficiency in multiuser networks that operate with event-triggered sampling by enabling a constructive reconstruction of transmission collisions. 
We have demonstrated that the proposed method leads to a better event transmission efficiency than conventional methods like \acs{TDMA} and slotted ALOHA. 
Remarkably, the proposed method achieves 100\% efficiency and 0\% error probability in almost all the studied cases, while consistently outperforming \acs{TDMA} and slotted ALOHA.

\end{abstract}
\begin{IEEEkeywords}
Multiuser communication, medium access control, \ac{WSN}, event-triggered communication, semantic communication, goal-oriented communication 
\end{IEEEkeywords}

\acresetall

\section{Introduction} \label{sec: Intro}
For many years, \acp{WSN} have been considered building blocks of new applications, e.g., in cities and industrial plants \cite{ALSKAIF2017141}.
Such cases require data networks that could run context-aware monitoring of different processes related to the infrastructure where many sensors are spread over a relatively small area, periodically reporting their readings to a central node. 
In this context, transmissions can be multimedia bursts, such as on-demand video streams, periodic samples of a signal, and flags for specific events~\cite{Chen6757189}.
%
%

Under the umbrella of \acp{WSN}, a distinct class of sensor nodes that operate by event-driven (or event-triggered) data acquisition has emerged in recent years~\cite{MarekBook}.
With the goal of reducing the amount of data acquired and transmitted by sensor nodes, the event-driven approach---when properly designed---provides an efficient way to acquire continuously sensed data \cite{de2022event}. 
In contrast to traditional periodic sampling techniques, the event-driven approach corresponds to nonperiodic sampling. 
Roughly speaking, event-driven data acquisition is based on the fact that the event-sampled signal can be properly reconstructed (in terms of specific error functions) as long as one is aware of the occurrence of a specific event.
Such a method is interesting in cases where the signal to be sampled is more impulsive and such sudden changes are important, or when atypical or unexpected behavior of a given signal needs to be identified. 
For example, an event could be an overtemperature measured by a particular sensor or electricity metering.

Similarly to event-based signal processing, event-driven communication has potential advantages like reducing the number of unnecessary transmissions by sensors, focusing only on events that are more informative \cite{Miskowicz6010049,Santos8836592,Castro8316955}.
A crucial issue for such nonperiodic transmissions relates to the synchronization of transmissions and packet identification.
In this regard, the literature presents some techniques to perform this synchronization.
For example, the IEEE 802.11~\cite{IEEE802} standard defines a known binary sequence sent before data transmission to synchronize the transmitter and the receiver.
This synchronization solution imposes an extension on the transmitted packet, and, therefore, it may lead to an overly large header. 
For instance, given an 8-bit accuracy in the sensor reading, the synchronization message used by IEEE 802.11~\cite{IEEE802} would be eight times the payload size.
Another solution is to establish precisely synchronized clocks in the receiver and the transmitter. 
In this way, the receiver can experience the exact start and end of the bit.
This solution is widely adopted in systems based on \ac{OFDM}~\cite{Fazel2008}. 
The physical layer solution to be proposed here resembles a temporal synchronization \ac{OFDM} model.

Another fundamental issue for multiuser networks is physical media sharing. 
Specifically, we are interested in the communication from a potentially large number of sensors to a central node (gateway) forming a many-to-one topology where multiple users share the same physical medium to transmit to the gateway. 
To solve this issue, a wide range of \ac{MAC} protocols has been proposed as a way to deal with possible collisions by controlling which nodes can access the network shared resources, and/or allowing for retransmissions  \cite{popovski2020wireless,dester2021delay}. 

The simplest solution is to use random access protocols like ALOHA, where nodes transmit whenever they have a packet.
The downside is that the network performance in terms of throughput is maximized with a relatively high number of collisions. 
As expected, whenever a collision happens, the network resources, including energy, are wasted (although the network performance is optimized).
To mitigate this issue, some \ac{MAC} protocols have been designed to establish collision-free communication, which can only be achieved through overheads, centralized resource allocation, and/or contention-based protocols.
Those issues are well-known in the literature as presented in \cite{popovski2020wireless}.

Some new concepts and related technologies have emerged over the last few years, specially in upper-layer network control as, for example, named-data networks \cite{6231276}, semantic-plane protocols \cite{Popovski2020}, zero-touch networks \cite{bega2020aztec}, and software-defined networks \cite{li2016random}.
Grant-free access in cellular networks focusing on machine-type communication has also been studied in \cite{riolo2021modeling,jiang2021massive}.
In this context, a very recent and promising approach is \textit{semantics-empowered communication}, whose ambitious aim is to change the widespread ``agnostic'' paradigm of communication engineering to allow a timely generation and provision of information to the correct processing point~\cite{9475174}.
In this approach, the data are quantitatively measured in terms of their importance, and the reading and transmission of data are then regulated by this metric.
The results demonstrated a significant reduction in the traffic load among other improvements\footnote{There are a few other recent papers on semantic, or goal-oriented, communication which are not yet published, e.g., \cite{Agheli,Qin,Stamatakis}; while we are aware of them, as will become clear later on, our approach differs significantly from them.}.

Despite the unquestionable importance of the above-mentioned approaches, they diverge considerably from the concept to be proposed here.
In fact, our objective is to propose a radical change in the established way of designing wireless communication systems by incorporating both the semantics and function of data to be transmitted, in some sense modifying the well-accepted layered network models.
We consider our approach disruptive for two main reasons: (i) collisions during transmission do not make the transfer of information unfeasible; and (ii) the proposed approach takes advantage of semantic-functional knowledge about the data to design the physical layer.
In other words, we study a communication system where events with semantic value are defined, which will be transmitted to another point where this information will be used to perform a specific task.
In this sense, the acquired data have a meaning and a function.
%
%

Next, we summarize our main contributions:
\begin{itemize}
    \item We propose a \ac{SFC} system for sensor networks using predefined events to be employed in alarm detection.
    \item For the proposed \ac{SFC} system, we establish a new way of dealing with collisions by constructing random maps.
    \item New expressions of performance analysis (error probability and transmission efficiency) are derived for the proposed \ac{SFC}, \ac{TDMA}, and slotted ALOHA following the proposed model.
    \item We demonstrate by numerical results that the proposed \ac{SFC} outperforms slotted ALOHA- and \ac{TDMA}-based systems in all studied scenarios.
    %
    \item We show that the proposed solution achieves a transmission efficiency of 100\% and an average link error probability of 0\% in most of the studied cases, which indicates its extremely effective capability of allocating time and frequency resources.
\end{itemize}

The remainder of the paper is organized as follows:
\cref{sec:preli} introduces the scenario to be studied compared with other existing solutions.
The network topology and the proposed physical layer are presented in detail in \cref{sec:sys}.
\cref{sec:mapping} describes how to construct maps to transmit events related to different sensors, also showing how \ac{TDMA} and slotted ALOHA systems can be designed for comparison purposes.
The performance analysis and numerical results are provided in \cref{sec: Perform} and \cref{sec: Numerical}, respectively.
\cref{sec: Conclusion} concludes the paper.

\section{Preliminaries}
\label{sec:preli}
This section introduces the scenario to be studied in order to explain the proposed solution designed for semantic-functional and  event-triggered  communication systems. 
Our aim here is to clarify the novelty and benefits of such a solution compared with established modulation and random access control methods.
\subsection{Event-Triggered Sampling and Communication}

\begin{figure*}
    \centering
    \includegraphics[width=1\linewidth]{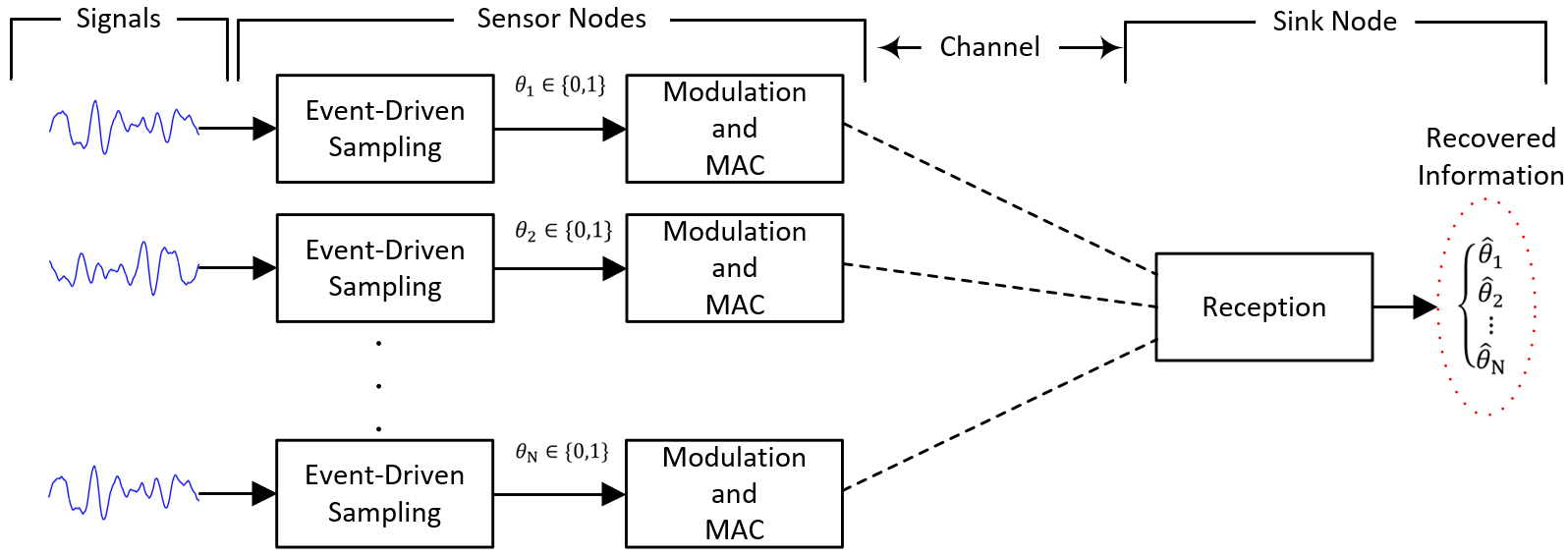}
    \caption{Block diagram of the main components of the system.}  
    \label{fig: scenario}
\end{figure*}

The first step is to present a schematic of the proposed event-triggered sampling and communication with the essential elements of the model as illustrated in  Fig. \ref{fig: scenario}. 
From left to right in the figure, we have: (i) different signals obtained from the monitored physical processes (e.g., the temperature in different positions of an industrial plant); (ii) the sensor nodes composed of data acquisition and transmission stages; (iii) a multiuser communication channel; and (iv) the sink node that needs to recover the transmitted information.
It is worth noting that in the proposed scenario, the goal of the sink is not to faithfully recover the monitored physical signals, but rather to flag whether a predetermined event at a given sensor has happened.
Fig. \ref{fig: exp_signal} depicts the aforementioned situation, where  the event related to a signal is a ``threshold crossing with positive derivative.'' 
\begin{figure}[ht!]
    \centering
    \includegraphics[width=1\linewidth]{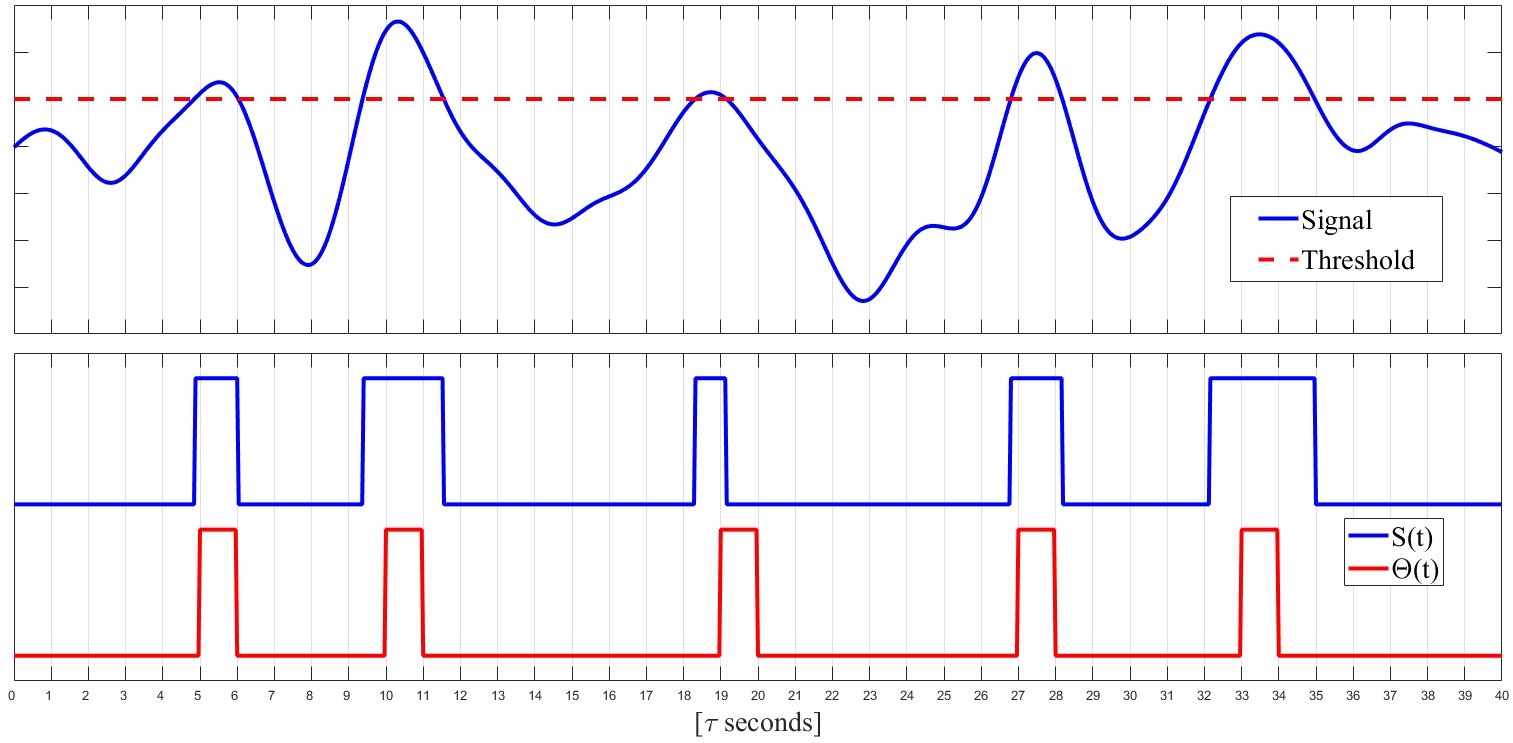}
    \caption{Example of a signal and its respective trigger function $S(t)$ and event function $\theta(t)$.}  
    \label{fig: exp_signal}
\end{figure}
Let us illustrate this by assuming that the physical signal refers to the temperature in a given room. 
The sensor monitors the signal and only acquires the data if the temperature rises above a given threshold.
When it does, this information needs to be transmitted to the sink node through a given communication channel, which is also available to the other users.
The sink node needs to indicate if the event, i.e., the threshold crossing, has happened in that specific node. 
This can be formalized as follows.
Let the \textit{trigger function} $S: \mathbb{R} \rightarrow \{0,1\}$ be defined as 
\begin{IEEEeqnarray}{lcl}
S_{\mathcal{E}}(t)=
\begin{cases}
0, \text{if the signal $\mathcal{E}$}<\text{threshold}\\
1, \text{otherwise}
\end{cases},
\end{IEEEeqnarray}
where $\mathcal{E}$ refers to the physical process monitored by one of the $N$ sensors.

Defining a time interval $\mathcal{T}_n=[(n-1) \tau,n\tau)$, where 
$n \in \mathbb{N}$
and $\tau \in \mathbb{R}^+$, we can then define the \textit{event function} $\theta_{\mathcal{E}}(t)$ with $\theta : \mathbb{R} \rightarrow \{0,1\}$ indicating whether an event occurred during $\mathcal{T}_n$ on the signal $\mathcal{E}$ as
\begin{IEEEeqnarray}{lcl}
    \theta_{\mathcal{E}}(t)=
    \begin{cases}
        1,& \text{if } \exists \, x, y\in \mathcal{T}_{\lfloor t/\tau \rfloor}, \, x < y \\ & \text{such that } S_{\mathcal{E}}(x) < S_{\mathcal{E}}(y) \\
        0,& \text{otherwise}
    \end{cases},
\end{IEEEeqnarray}
in which $\lfloor \cdot \rfloor$ rounds to the largest integer less than or equal to the argument.
At the sink, the event function $\theta_{\mathcal{E}}(t)$ is estimated by $\hat{\theta}_{\mathcal{E}}(t)$. 
Hereafter, we refer to an event (i.e., $\theta_{\mathcal{E}}(t)=1$) occurring on the signal $\mathcal{E}$ by event $\mathcal{E}$ in order to simplify the notation.

This communication strategy is suitable for rare events like alarms (e.g., the temperature has crossed the safety threshold).
In this case, the threshold defines the event that must have a meaning in relation to the physical process being processed. 
At the sink, the knowledge of the occurrence of the event communicated by a specific node must be functional in relation to its meaning; in this case, trigger an alarm for the specific room where the temperature is being monitored by that sensor.
Following the terminology of \acp{WSN}, the sink node acts as a fusion center, which is capable of concentrating the readings of several sensor nodes~\cite{ALSKAIF2017141}.

\textbf{Remark:} This scheme is constructed assuming that the information to be transmitted has a well-defined meaning (temperature and safety of the room) that will have a functional role (as part of the alarm system). For this reason, we call this \textit{semantic-functional communication}.

Note also that alarm monitoring is a primary issue in some environments like chemical manufacturing, oil refineries, petrochemical facilities, and power plants, and different communication strategies have been reported in the literature~\cite{Wang7270356,santos2020performance}. 
Our aim here is to build an effective way to communicate events with a low error probability, while efficiently using the communication network resources. 
Before going into the details of the proposed solution, we now point out the key differences with other medium access control (MAC) techniques that could perform this task.

\subsection{Comparison with Different MAC Protocols}
The scenario described above can be deployed using standard MAC protocols, either random access or channel partitioning.
The random type refers to protocols that attempt to minimize collisions through random decisions (e.g., ALOHA). 
Channel partitioning protocols centrally share the available resources (e.g., time or frequency), which can be done dynamically or statically  (e.g., variations of \ac{TDMA}). 
The class of random \ac{MAC} protocols have the advantage of having a low complexity of control over the nodes, but it either has a poor performance in terms of packet collision or should be complemented with error detection and retransmission schemes.
Static channel partitioning protocols split the shared resources to guarantee the reliability of the package delivery reliability while using a relatively small control traffic. 
Dynamic channel partitioning protocols require  relatively large control traffic compared with the message size considered in the proposed scenario (this last option will not be considered here).
In summary, we will focus on a comparison between the proposed \acs{SFC} system, slotted ALOHA, and \ac{TDMA}.

In \ac{TDMA}, all sensor nodes are synchronized, and thus, fixed and equal portions of time are allocated to each sensor node. 
In this way, a sensor node uses its predefined time slot to communicate an alarm to the sink node, which avoids collisions. 
In this hypothetical scenario, only a small amount of control traffic is used in order to keep the nodes synchronized.
Further, the message could be exempt from a header because each sensor can be identified by the time slot used.
In addition, the maximum latency would be the \ac{TDMA} frame length with assured delivery and reproduction of the event function $\theta_{\mathcal{E}}(t)$ on the sink node side.

Slotted ALOHA can also be easily applied to our scenario, considering that each sensor node has a unique identity to be transmitted to the sink node together with the message.
ALOHA does not require synchronization, although slotted ALOHA improves the communication success.

In the following section, we will describe the proposed solution, aiming to overcome the drawbacks of both channel partitioning and random protocols, represented by \ac{TDMA} and slotted ALOHA, respectively. 
As will become clear throughout this paper, the proposed physical layer encompasses aspects of modulation, encoding, and MAC in a single layer.
In this sense, the proposed design is not an improved version of any existing \ac{MAC} protocol or modulation technique. 
The idea we present next is rather to establish a truly \ac{SFC} system  based on event-triggered sampling for alarm messages.

\section{Proposed Semantic-Functional Communication System for Alarm Messages}
\label{sec:sys}
In this section, we will describe the proposed solution to transmit alarms, following Fig. \ref{fig: scenario}.
Specifically, the network topology and the proposed physical layer will be presented, together with the semantic way to code and decode the transmitted alarm message.

\subsection{Framed Structure of Network Resources}
{The proposed approach} assumes a framed structure of the network resources.
Let $\mathbb{S}$ be a set of $N$ code words associated with each sensor node illustrated in Fig. \ref{fig: scenario}, and thus, the most compact way to represent them is by $k= \lceil \log_2 N \rceil$ bits, where $\lceil \cdot \rceil$ rounds to the smallest integer larger than or equal to the argument. 
Within our context, each event $\mathcal{E}$ is defined by a unique set of $k$ bits, i.e., by a code word.
Let $R$ be the number of subsets of the shared resource. 
For example, if the shared resource is the frequency spectrum with a bandwidth $W$, we would have $R$ subcarriers of width $W/R$ Hz each.

Fig. \ref{fig: symb struc} illustrates the proposed structure for the radio frame.
We map each code word using only one energy slot per subsymbol; therefore, the receiver must evaluate $k$ subsymbols to make a decision (more details will be provided later in this section). 
In addition, the receiver must regard any ensemble of $k$ successive subsymbols as one radio symbol.
For instance, the next radio symbol with respect to Fig. \ref{fig: symb struc} is composed of subsymbols $\#2, \#3,\dotsc, \#k+1$.
A subsymbol, in turn, is built out of a set of $R$ energy slots.
This is the basic structure of our communication system, which, to the best of our knowledge, is novel and different from existing systems.

\begin{figure}[t!]
    \centering
    \includegraphics[width=1.1\linewidth]{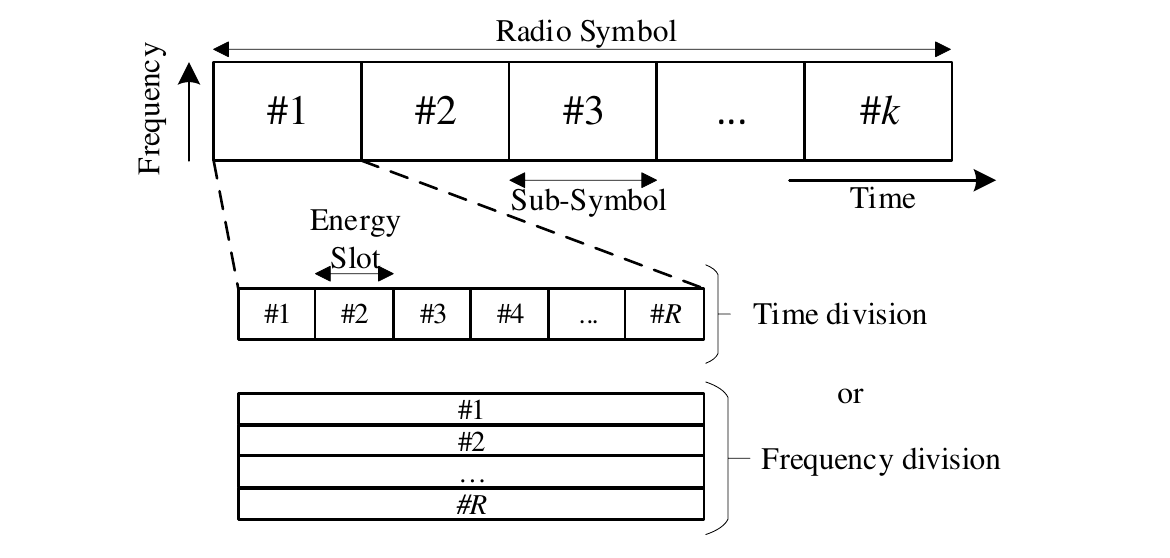}
    \caption{Radio symbol structure.}  
    \label{fig: symb struc}
\end{figure}

As mentioned above, the energy slot is the basic element of our proposed approach. 
The energy slot represents an interval at which the receiver must estimate the signal energy. 
The receiver does not demodulate the energy slot, because it does not actually carry information as a modulated signal. 
The goal of the receiver is to determine whether or not there is a signal in an energy slot by estimating its energy. 
Accordingly, it is not required to synchronize the phase of the local carrier at the receiver.
%

Although the lack of synchronization and allocation control can reduce the energy required to send a code word, because of the preamble and control messages not being used, it may nevertheless require that the receiver constantly observes the channel. 
Consequently, the receiver has the uninterrupted task of detecting a transmitted pulse or inferring that the medium was idle during the duration of an energy slot. 
On the other hand, the sensor nodes can instantly send the information when they detect an event. 
\subsection{Transmission of a Given Event}
Consider a network composed of $N$ sensor nodes that uninterruptedly monitor continuous signals, each of them associated with a unique code word.
The purpose of sensor nodes is to recognize whether a given predetermined event $\mathcal{E}$ has happened in order to feed this information to an alarm system centrally controlled at the sink. 
As discussed in Sec. \ref{sec:preli}, we consider a multiuser scenario where sensor nodes share the same communication resources, as defined in the previous subsection, to transmit whether the event associated with the corresponding measurements has happened.
%

Without loss of generality, we assume frames that are divided only in the time domain. 
Let us define a discrete version of the event function $\theta_{\mathcal{E}}(t)$ as $\theta_{\mathcal{E}}[n]=\theta_{\mathcal{E}}(n\tau)$ with $\theta : \mathbb{N} \rightarrow \{0,1\}$.
Note that the sink node can easily build $\theta_{\mathcal{E}}(t)$ from $\theta_{\mathcal{E}}[n]$.
By setting the parameter $\tau \in \mathbb{R}^+$ as the duration of a subsymbol, we can then create the following event transmission rule:
%
if $\theta_{\mathcal{E}}[n]=1$ (i.e., an alarm is detected), then the sensor node starts transmitting $k$ subsymbols in sequence starting at a discrete time $n$.
This is illustrated in Fig. \ref{fig: trans chain}, where the state diagram of the sensor node is presented.

In order to illustrate how the sensor node manages event transmission, let $\mathbf{Y}_n$ be a $k \times R$ matrix representing the radio symbol at a discrete time $n$ (i.e., the radio symbol whose first subsymbol is in $\mathcal{T}_n$) with the element $y_{i,j} \in \{0,1,\dotsc,N\}$ in the $i\text{th}$ row and the $j\text{th}$ column representing the energy inside the energy slot $j$ of the subsymbol $i+n$. 
Further, let $\mathbf{C}_{\mathcal{E}}$ be a $k \times R$ matrix with the element $c_{i,j} \in \{0,1\}$ in the $i\text{th}$ row and the $j\text{th}$ column. 
The matrix $\mathbf{C}_{\mathcal{E}}$ then represents the binary mapping of the event $\mathcal{E}$ into $k$ subsymbols (more details will be provided later in Section \ref{sec:mapping}).
To simplify the notation, we henceforth refer to the binary transmission mapping matrix of the event $\mathcal{E}$, i.e., $\mathbf{C}_{\mathcal{E}}$, by the transmission map $\mathcal{E}$.
For simplicity, we assume that only two events occur in a discrete time $n$. 
Furthermore, we assume that the transmission map for one of these events is given by the matrix $\mathbf{C}_{1}$ and for the other event given by the matrix $\mathbf{C}_{2}$.
Therefore, the $n\text{th}$ radio symbol is given by $\mathbf{Y}_{n} = \mathbf{C}_{1} + \mathbf{C}_{2}$.
In other words, the sensor node, which monitors the signal $\mathcal{E}$, adds energy (i.e., it transmits an unmodulated carrier for the duration of an energy slot) to the respective energy slots of the $n$th radio symbol in accordance with the values of $c_{i,j}$.
Note that the sensor does not need any knowledge of the channel status to transmit.
\begin{figure}[t!]
    \centering
    \includegraphics[width=0.7\linewidth]{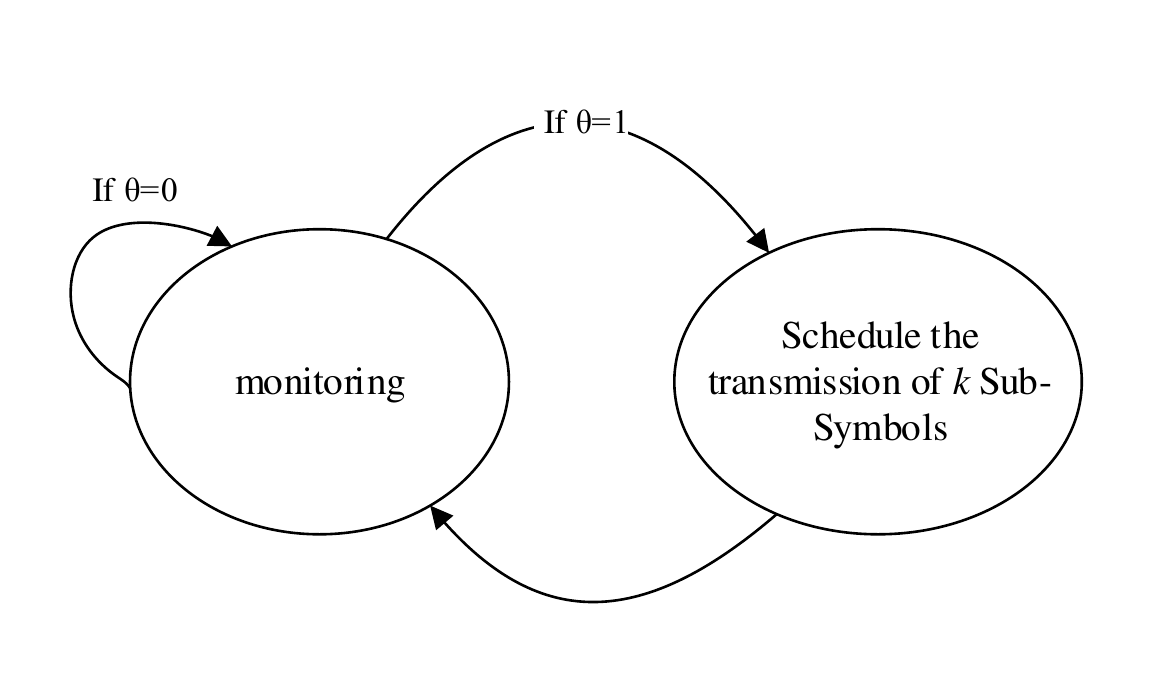}
    \caption{State chain of a sensor node.}  
    \label{fig: trans chain}
\end{figure}

\subsection{Reception of a Given Event}
On the reception side, the sink node needs to correctly identify the occurrence of a predetermined event related to the specific node that transmits it.
The recovery of this semantic information is straightforward in our approach, as it does not require any demodulation or data processing (recall that the transmitted message is neither modulated as a waveform nor composed of higher layer overheads). 
In this case, if the function of the sink is to operate an alarm system based on the occurrence of predetermined events, then all relevant information for that function can be recovered simply by receiving or not a code word. 
Note that even when no sensor transmits, the sink continues to acquire information about the physical processes, because it can correctly identify if the channel is idle; in other words, silence in the channel is informative in its own right, and the proposed semantic-functional approach makes use of this fact.

The sink node, in turn, must constantly monitor the channel and create a log of the received subsymbols, and then search for valid transmission maps.
The process of identifying an event is as follows. 
Let $\mathbf{H}_n$ be a $k \times R$ matrix containing the last $k$ subsymbols received (i.e., one radio symbol) at a discrete time $n \geq k$ with the element $h_{i,j} \in \{0,1,\dotsc,N\}$ in the $i\text{th}$ row and the $j\text{th}$ column representing the estimated energy slot (``1'' for used, ``0'' otherwise) $j$ within the received subsymbol $i-k+n$. 
%
%
In order to find out if the transmission map $\mathcal{E}$ was transmitted at a discrete time $n$, the sink must first perform a point-by-point multiplication between $\mathbf{C}_{\mathcal{E}}$ and $\mathbf{H}_n$ and then count how many non-null elements the resulting matrix has.
If the counting result is $k$, the sink decides in favor of the event $\mathcal{E}$, otherwise not.

The challenge is now to construct a method to generate transmission maps that can uniquely map the event related to a given sensor, which can then be decoded at the sink in the presence of multiple users.
In the next section, we will describe the proposed random map in relation to conventional mappings based on the MAC protocols we discussed above.
\section{Mapping Sensor-Defined Events into Transmission Maps}
\label{sec:mapping}

This section presents how to map the predetermined events corresponding to sensor observations into transmission maps defined by the sequence of energy slots that define subsymbols.
After introducing the proposed method based on a random map, we will indicate how the scheme defined in the previous section can be deployed using conventional modulation and media access control protocols.
\subsection{Random Map} \label{sec: Random}
Let us assume that the total number of possible transmission maps is given by $R^k$, of which $N$ transmission maps are used to map all possible events. 
Let $T$ and $E$ be the number of  transmitted transmission maps and the number of erroneously received transmission maps for a radio symbol, respectively.
Now, suppose a radio symbol contains $F$ transmission maps that do not map any event.
Hence, there are $F$ transmission maps that carry no semantic information, and thus, are ignored by the sink during the reception process because they do not belong to the set of valid code words that map events. 

Note that we have $F+T+E=Q$ transmission maps for each radio symbol. 
Thereby, assuming that the $R^k$ possible transmission maps are equiprobable, the probability that $E$ transmission maps are incorrectly received given $Q$ can be computed as 
\begin{IEEEeqnarray}{lcl}\label{eq: prob E}
    P[E|Q,T] = \frac{\binom{Q-T}{E} \binom{R^k-Q}{N-T-E}}{\binom{R^k-T}{N-T}}
\end{IEEEeqnarray}
for $T \leq N \leq R^k$ and $T+E\leq Q \leq R^k$. 

Equation \eqref{eq: prob E} can be understood as the probability of obtaining $E$ favorable outcomes in an experiment where $F+E$ numbers are chosen from a set of $R^k-T$ options, and a random draw without replacement and ordering of $N-T$ numbers is performed.
Note that in the proposed approach there are always exactly $T$ transmission maps (correctly) transmitted, and thus, the number $R^k$ of elements in the sampling space of incorrectly received transmission maps needs to be reduced to $R^k-T$. 
%
It is worth mentioning that \eqref{eq: prob E} refers to a hypergeometric distribution.

We can now present an interesting fact about the proposed solution.
\begin{lemma}
    The error probability ${P_\emph{e}(Q,T) \triangleq 1-P[E=0|Q,T]}$ can be lower and upper bounded by
    \begin{align*}
    1- \left(1-\frac{N-T}{R^k+1}\right)^Q \frac{\binom{R^k}{N-T}}{\binom{R^k-T}{N-T}} \le P_\emph{e}(Q,T)
    \end{align*}
    and
    \begin{align*}
    P_\emph{e}(Q,T) \le 1-\left(1-\frac{N-T}{R^k-Q+1}\right)^Q \frac{\binom{R^k}{N-T}}{\binom{R^k-T}{N-T}}.
    \end{align*}
\end{lemma}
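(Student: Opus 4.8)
The plan is to evaluate $P[E=0|Q,T]$ in closed form from \eqref{eq: prob E}, rewrite it as a product of $Q$ elementary factors multiplied by the constant $\binom{R^k}{N-T}/\binom{R^k-T}{N-T}$ that already appears on both sides of the claim, and then bound that product by replacing every factor with the smallest (respectively largest) one over its index range. No probabilistic argument is needed beyond \eqref{eq: prob E} itself.

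First I would specialize \eqref{eq: prob E} to $E=0$. This is compatible with the stated ranges (it only forces $T\le Q$), and it makes the first numerator binomial equal to $\binom{Q-T}{0}=1$, so
\[
P[E=0|Q,T]=\frac{\binom{R^k-Q}{N-T}}{\binom{R^k-T}{N-T}}=\underbrace{\frac{\binom{R^k-Q}{N-T}}{\binom{R^k}{N-T}}}_{=:P_0}\cdot\frac{\binom{R^k}{N-T}}{\binom{R^k-T}{N-T}}.
\]
The insertion of $\binom{R^k}{N-T}$ isolates exactly the factor occurring in the claimed bounds. Next I would use the elementary identity $\binom{R^k-Q}{N-T}\big/\binom{R^k}{N-T}=\binom{R^k-(N-T)}{Q}\big/\binom{R^k}{Q}$ and expand the falling factorials, so that $P_0$ telescopes:
\[
P_0=\prod_{i=0}^{Q-1}\frac{R^k-(N-T)-i}{R^k-i}=\prod_{i=0}^{Q-1}\left(1-\frac{N-T}{R^k-i}\right).
\]

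Then I would bound the generic factor uniformly in $i\in\{0,1,\dots,Q-1\}$. On this range $R^k-Q+1\le R^k-i\le R^k$, so each factor lies between $1-\frac{N-T}{R^k-Q+1}$ and $1-\frac{N-T}{R^k}\le 1-\frac{N-T}{R^k+1}$. Replacing all $Q$ factors by the smallest of these bounds gives $P_0\ge\bigl(1-\frac{N-T}{R^k-Q+1}\bigr)^{Q}$; replacing them by the largest (followed by the harmless relaxation $R^k\mapsto R^k+1$) gives $P_0\le\bigl(1-\frac{N-T}{R^k+1}\bigr)^{Q}$. Multiplying through by the nonnegative constant $\binom{R^k}{N-T}/\binom{R^k-T}{N-T}$ yields a two-sided bound on $P[E=0|Q,T]$, and since $P_{\mathrm{e}}(Q,T)=1-P[E=0|Q,T]$, a lower bound on $P[E=0|Q,T]$ turns into the asserted upper bound on $P_{\mathrm{e}}$, and an upper bound into the asserted lower bound.

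The argument has no deep step; the points demanding care are the bookkeeping of inequality directions — bounding $P[E=0|Q,T]$ from one side bounds $P_{\mathrm{e}}$ from the other — and the degenerate parameter regime $R^k-Q<N-T$, where $\binom{R^k-Q}{N-T}=0$, hence $P_{\mathrm{e}}=1$, and the product expansion has a nonpositive factor. In that regime one verifies the two inequalities directly (the lower bound is $\le 1$ because the subtracted term is nonnegative, and the upper bound is checked against $P_{\mathrm{e}}=1$), rather than through the telescoped form. I expect this edge-case verification to be the only mildly tedious part of the proof.
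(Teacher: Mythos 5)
Your proof is correct and takes essentially the same route as the paper's: both reduce $P[E=0|Q,T]$ to the identical product $\frac{\binom{R^k}{N-T}}{\binom{R^k-T}{N-T}}\prod_{i=1}^{Q}\bigl(1-\frac{N-T}{R^k-i+1}\bigr)$ (the paper via the one-step recursion $p_{\text{E}}(Q,T)=\alpha(Q)\,p_{\text{E}}(Q-1,T)$, you via the binomial identity and a telescoping falling-factorial expansion) and then bound every factor uniformly over the index range before flipping the inequality directions through $P_{\text{e}}=1-P[E=0|Q,T]$. One caveat on your closing remark: in the degenerate regime $R^k-Q<N-T$ the stated upper bound can actually fail (for even $Q$ the quantity $\bigl(1-\frac{N-T}{R^k-Q+1}\bigr)^Q$ is positive, making the claimed upper bound strictly less than $P_{\text{e}}=1$), so it cannot simply be ``checked directly'' as you assert --- but the paper's own proof silently assumes the nonnegative-factor regime as well, so this does not distinguish your argument from theirs.
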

\begin{proof}
    We start with the probability that the system has at least one error, which can be computed as $1-P[E=0|Q,T]$. 
Then, we can rewrite \eqref{eq: prob E} as $P[E=0|Q,T]=p_{\text{E}}(Q,T)=\alpha(Q) p_{\text{E}}(Q-1,T)$ $\forall Q \geq 1$ with $p_{\text{E}}(0,T) \triangleq \lim_{Q \rightarrow 0}P[E=0|Q,T]$ where, after some mathematical manipulation, we have
\begin{IEEEeqnarray}{lcl}
    \alpha(Q) = \frac{p_{\text{E}}(Q,T)}{p_{\text{E}}(Q-1,T)} = 1-\frac{N-T}{R^k-Q+1}.
\end{IEEEeqnarray}

Hence, the error probability as a  function of the variables $Q$ and $T$ is written as
\begin{IEEEeqnarray}{lcl}\label{eq: prob Error}
    P_{\text{e}}(Q,T) = 1-\frac{\binom{R^k}{N-T}}{\binom{R^k-T}{N-T}}\prod_{i=1}^{Q}\alpha(i).
\end{IEEEeqnarray}

We can postulate that 
\begin{align*}
    1-\frac{N-T}{R^k-Q+1} \leq \alpha(i) \leq 1-\frac{N-T}{R^k+1}    
\end{align*}
for all $i \in \{1,2\dots,Q\}$, because
\begin{align*}
    1-\frac{N-T}{-i+R^k+1} &\geq 1-\frac{N-T}{R^k-Q+1} \\
    \frac{N-T}{R^k+1-i} &\leq  \frac{N-T}{R^k+1-Q} \\
    0 &\leq (N-T)(Q-i)
\end{align*}
and
\begin{align*}
    1-\frac{N-T}{R^k+1-i} &\leq  1-\frac{N-T}{R^k+1} \\
    \frac{N-T}{R^k+1-i} & \geq \frac{N-T}{R^k+1} \\
    0 & \leq i (N-T).
\end{align*}
Therefore, we have
\begin{align*}
    \left( 1-\frac{N-T}{R^k-Q+1} \right)^Q \leq \prod_{i=1}^{Q} \alpha(i) \le  \left( 1-\frac{N-T}{R^k+1} \right)^Q
\end{align*}
    
\end{proof}

Thus far, we have only studied the overall error probability, which is evaluated in terms of the vector $\hat{\Theta}=(\hat{\theta}_1, \hat{\theta}_2,\dotsc,\hat{\theta}_N)$.
In this case, all $N$ sensors are considered, i.e., the error probability quantifies the overall reliability of the system. 
In other words, the error probability quantifies how accurate the estimation $\hat{\Theta}$ is.
%
Thereby, we can understand how prone to error our communication system is. 

On the other hand, it is important to evaluate how accurate the estimation of a specific event function related to one given sensor is.
In this case, we now focus our analysis on the error probability related to $\hat{\theta}_{\mathcal{E}}$ for an arbitrary sensor node.

At this point, we can state the error probability of a single event.
\begin{lemma}\label{lem: prob s event}
    The error probability for a single event function is given by 
    \begin{IEEEeqnarray}{lcl}\label{eq: error prob theta given F long}
        P_{\emph{e}}^{\hat{\theta}_{\mathcal{E}}}(Q,T) = \frac{Q-T}{R^k-T}P[\theta_{\mathcal{E}}=0].
    \end{IEEEeqnarray}
\end{lemma}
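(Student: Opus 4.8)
The plan is to decompose the event $\{\hat{\theta}_{\mathcal{E}}\neq\theta_{\mathcal{E}}\}$ according to the value of $\theta_{\mathcal{E}}$ and to show that the error is one-sided. First I would observe that a false negative cannot occur: if $\theta_{\mathcal{E}}=1$, the sensor associated with $\mathcal{E}$ transmits the map $\mathbf{C}_{\mathcal{E}}$, so each of the $k$ energy slots selected by $\mathbf{C}_{\mathcal{E}}$ is occupied in the received radio symbol $\mathbf{H}_n$; hence the point-by-point multiplication of $\mathbf{C}_{\mathcal{E}}$ and $\mathbf{H}_n$ has exactly $k$ non-null entries and the sink necessarily outputs $\hat{\theta}_{\mathcal{E}}=1$. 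Therefore the only way to err is a false positive, and, treating $Q$ and $T$ as given,
\[
P_{\mathrm{e}}^{\hat{\theta}_{\mathcal{E}}}(Q,T)=P\!\left[\hat{\theta}_{\mathcal{E}}=1\ \middle|\ \theta_{\mathcal{E}}=0\right]P[\theta_{\mathcal{E}}=0].
\]

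Next I would evaluate $P[\hat{\theta}_{\mathcal{E}}=1\mid\theta_{\mathcal{E}}=0]$. Given $\theta_{\mathcal{E}}=0$, the map $\mathcal{E}$ is not transmitted, so it is one of the $R^k-T$ non-transmitted maps; it is spuriously detected precisely when it lies among the $Q-T=F+E$ non-transmitted maps whose $k$ slots are all covered by the union of the $T$ transmitted maps. Invoking the equiprobability assumption used to derive \eqref{eq: prob E}, the $R^k$ maps are exchangeable, so conditioned on the number of spuriously covered non-transmitted maps equalling $Q-T$, every $(Q-T)$-element subset of the $R^k-T$ non-transmitted maps is equally likely, whence the fixed map $\mathcal{E}$ belongs to it with probability $\binom{R^k-T-1}{Q-T-1}\big/\binom{R^k-T}{Q-T}=\frac{Q-T}{R^k-T}$. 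Substituting into the display above gives the claim. As a consistency check I would also recover this from \eqref{eq: prob E} directly: by symmetry among the $N-T$ non-transmitted code words, $P[\mathcal{E}\text{ erroneously received}\mid E]=E/(N-T)$, so $P[\hat{\theta}_{\mathcal{E}}=1\mid\theta_{\mathcal{E}}=0]=\mathbb{E}[E\mid Q,T]/(N-T)$, and the hypergeometric mean $\mathbb{E}[E\mid Q,T]=(Q-T)(N-T)/(R^k-T)$ gives $\frac{Q-T}{R^k-T}$ again.

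The step I expect to be the main obstacle is making the symmetry argument rigorous: one must verify that ``equiprobable transmission maps'' really renders the coverage indicators of the non-transmitted maps exchangeable, and that conditioning on the aggregate counts $Q$ and $T$ (rather than on which particular maps were transmitted or covered) introduces no bias toward or against the specific code word $\mathcal{E}$; one must also note that the way $Q-T$ splits into $F$ and $E$ is immaterial for a single fixed code word, since the detection test is the same for valid and invalid maps. Once that symmetry is established, the remainder is the elementary counting identity above together with the one-line substitution.
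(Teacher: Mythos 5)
Your proof is correct and follows essentially the same route as the paper: the same two-term decomposition of the error, the same observation that $P[\hat{\theta}_{\mathcal{E}}=0\mid\theta_{\mathcal{E}}=1]=0$ for the SFC receiver, and the same reliance on the equiprobability of the $R^k$ maps to obtain the factor $\frac{Q-T}{R^k-T}$, which is then multiplied by $P[\theta_{\mathcal{E}}=0]$. The paper gets that factor by substituting $N=T+1$ into \eqref{eq: prob E} (i.e., treating $\mathcal{E}$ as the sole non-transmitted valid code word and performing a single draw), which is precisely the exchangeability argument you spell out explicitly; your consistency check via the hypergeometric mean $\mathbb{E}[E\mid Q,T]=(Q-T)(N-T)/(R^k-T)$ is a sound additional confirmation.
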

\begin{proof}
    The error probability related to $\hat{\theta}_{\mathcal{E}}$ for an arbitrary sensor node can be written as 
    \begin{IEEEeqnarray}{lcl}
        P_{\text{e}}^{\hat{\theta}_{\mathcal{E}}} &=& P[\hat{\theta}_{\mathcal{E}}=1|{\theta}_{\mathcal{E}}=0]P[{\theta}_{\mathcal{E}}=0] \nonumber \\
        &+& P[\hat{\theta}_{\mathcal{E}}=0|{\theta}_{\mathcal{E}}=1]P[{\theta}_{\mathcal{E}}=1].
    \end{IEEEeqnarray}

    We have $P[\hat{\theta}_{\mathcal{E}}=0|{\theta}_{\mathcal{E}}=1]=0$  for our \acl{SFC} system, and $P[\hat{\theta}_{\mathcal{E}}=1|{\theta}_{\mathcal{E}}=0]$ can be evaluated by \eqref{eq: prob E} with $N=T+1$.
    In the case of a single event function, \eqref{eq: prob E} can be interpreted by $E\in\{0,1\}$ favorable outcomes in an experiment where $F+E$ numbers are chosen from a set of $R^k-T$ options, and a random draw of only one number is performed.
    %
    %
    %
    Substituting $N=T+1$ into \eqref{eq: prob E} and, after some mathematical manipulation, we find the error probability for a single event function given $Q$ and $T$ as
    \begin{IEEEeqnarray}{lcl}
        P[E=1|Q,T] = P_{\text{e}}^{\hat{\theta}_{\mathcal{E}}}(Q,T)  &=\frac{Q-T}{R^k-T}P[\theta_{\mathcal{E}}=0] 
    \end{IEEEeqnarray}
    %
%
\end{proof}

\begin{lemma}\label{lem: bound s event}
     The error probability for a single event function $P_{\emph{e}}^{\hat{\theta}_{\mathcal{E}}}(Q,T)$ can be lower and upper bounded by 
     \begin{IEEEeqnarray}{lcl}
         \frac{Q-T}{R^k}P[\theta_{\mathcal{E}}=0] \leq P_{\emph{e}}^{\hat{\theta}_{\mathcal{E}}}(Q,T) \leq \frac{Q-T}{R^k-N}P[\theta_{\mathcal{E}}=0].
     \end{IEEEeqnarray}
\end{lemma}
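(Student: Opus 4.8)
The plan is to piggyback entirely on \cref{lem: prob s event}, which already supplies the closed form
\begin{align*}
P_{\text{e}}^{\hat{\theta}_{\mathcal{E}}}(Q,T) = \frac{Q-T}{R^k-T}\,P[\theta_{\mathcal{E}}=0],
\end{align*}
so that establishing the two-sided bound reduces to sandwiching the single factor $1/(R^k-T)$ between $1/R^k$ and $1/(R^k-N)$, and then observing that the remaining factors $Q-T$ and $P[\theta_{\mathcal{E}}=0]$ are nonnegative, so the inequalities propagate without any sign change.

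First I would pin down the admissible parameter ranges coming from the model of \cref{sec: Random}: since each of the $N$ sensors contributes at most one transmission map per radio symbol, $0 \le T \le N$; since $F+T+E=Q$ with $F,E\ge 0$, we have $Q-T = F+E \ge 0$; and $P[\theta_{\mathcal{E}}=0]\in[0,1]$. I would also record $N < R^k$, which guarantees $R^k-N>0$, so every denominator is strictly positive and no inequality reverses upon inversion. Next, from $0 \le T \le N < R^k$ I would chain $0 < R^k-N \le R^k-T \le R^k$ and invert to get $\tfrac{1}{R^k} \le \tfrac{1}{R^k-T} \le \tfrac{1}{R^k-N}$. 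Finally, multiplying this chain by the nonnegative quantity $(Q-T)\,P[\theta_{\mathcal{E}}=0]$ and substituting the closed form above yields exactly
\begin{align*}
\frac{Q-T}{R^k}P[\theta_{\mathcal{E}}=0] \le P_{\text{e}}^{\hat{\theta}_{\mathcal{E}}}(Q,T) \le \frac{Q-T}{R^k-N}P[\theta_{\mathcal{E}}=0],
\end{align*}
which is the claim.

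The ``hard part'' here is purely bookkeeping: one must be careful that $T \le N < R^k$ so the denominators stay positive and the direction of the chain of inequalities is preserved, and that $Q-T\ge 0$ so multiplying through does not flip anything. There is no analytic obstacle once \cref{lem: prob s event} is in hand; as a sanity check, the lower bound is attained at $T=0$ and the upper bound at $T=N$, so both ends of the estimate are tight.
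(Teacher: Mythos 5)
Your proof is correct and follows essentially the same route as the paper's, which simply notes that since $0 \le T \le N$ the bounds follow directly from \cref{lem: prob s event} by replacing the denominator $R^k-T$ with $R^k$ and $R^k-N$ respectively. You merely spell out the bookkeeping (positivity of $Q-T$ and of the denominators, and the direction of the inequalities under inversion) that the paper leaves implicit.
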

\begin{proof}
    Since $0 \leq T \leq N$, the limits are taken directly from \cref{lem: prob s event}.
\end{proof}

\begin{corollary}
    Since $R^k \to \infty$, the bounds in \cref{lem: bound s event} converge to the exact value of the error probability for a single event function $P_{\emph{e}}^{\hat{\theta}_{\mathcal{E}}}(Q,T)$.
\end{corollary}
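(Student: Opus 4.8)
The plan is to prove the corollary by a direct squeeze between the two bounds in \cref{lem: bound s event}, using the exact expression of \cref{lem: prob s event} as the common limit. Write $L \triangleq \frac{Q-T}{R^k}\,P[\theta_{\mathcal{E}}=0]$ and $U \triangleq \frac{Q-T}{R^k-N}\,P[\theta_{\mathcal{E}}=0]$ for the lower and upper bounds, so that \cref{lem: bound s event} reads $L \le P_{\text{e}}^{\hat{\theta}_{\mathcal{E}}}(Q,T) \le U$, while \cref{lem: prob s event} gives the exact value $P_{\text{e}}^{\hat{\theta}_{\mathcal{E}}}(Q,T) = \frac{Q-T}{R^k-T}\,P[\theta_{\mathcal{E}}=0]$, which itself lies in $[L,U]$ since $0 \le T \le N$.

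First I would bound the width of the bracket:
\[
U-L = (Q-T)\,P[\theta_{\mathcal{E}}=0]\left(\frac{1}{R^k-N}-\frac{1}{R^k}\right) = \frac{(Q-T)\,N\,P[\theta_{\mathcal{E}}=0]}{R^k\,(R^k-N)}.
\]
Holding $N$, $T$, $Q$ (hence also $Q-T$ and $P[\theta_{\mathcal{E}}=0]$) fixed, the right-hand side is $O(1/R^{2k})$, so $U-L \to 0$ as $R^k \to \infty$. Since $P_{\text{e}}^{\hat{\theta}_{\mathcal{E}}}(Q,T)$ and each bound all lie in $[L,U]$, we obtain $|L - P_{\text{e}}^{\hat{\theta}_{\mathcal{E}}}(Q,T)| \le U-L \to 0$ and $|U - P_{\text{e}}^{\hat{\theta}_{\mathcal{E}}}(Q,T)| \le U-L \to 0$, which is the claimed convergence. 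Equivalently, the ratios $L/P_{\text{e}}^{\hat{\theta}_{\mathcal{E}}}(Q,T) = (R^k-T)/R^k$ and $U/P_{\text{e}}^{\hat{\theta}_{\mathcal{E}}}(Q,T) = (R^k-T)/(R^k-N)$ both tend to $1$, which is the scale-free statement of asymptotic tightness.

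There is no genuinely hard step here; the only point requiring care is pinning down the regime of the limit. In the difference form above, all three quantities in fact vanish when $N,T,Q$ are held fixed, so "converge to the exact value" is best read either (i) with $N,T,Q$ fixed and the difference $U-L\to 0$, or (ii) in the ratio form, which stays meaningful whenever $N/R^k \to 0$ and $T/R^k \to 0$ even if $Q$ is allowed to grow with $R^k$. I would state the corollary in one of these two forms, after which the proof is exactly the one-line squeeze displayed above.
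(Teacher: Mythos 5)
Your proof is correct and follows essentially the same route as the paper: compute the gap between the upper and lower bounds, obtain $\frac{N(Q-T)}{R^k(R^k-N)}P[\theta_{\mathcal{E}}=0]$, and observe it vanishes as $R^k \to \infty$. Your added remarks --- that the exact value of \cref{lem: prob s event} indeed sits inside the bracket, and that the limit is only meaningful in difference or ratio form since all three quantities vanish for fixed $N,T,Q$ --- are sound refinements the paper omits, but the core argument is identical.
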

\begin{proof}
    Let 
    \begin{IEEEeqnarray}{lcl}
        \delta = \frac{Q-T}{R^k-N}P[\theta_{\mathcal{E}}=0] - \frac{Q-T}{R^k}P[\theta_{\mathcal{E}}=0]=\frac{N(Q-T)}{R^k (R^k-N)}P[\theta_{\mathcal{E}}=0]
     \end{IEEEeqnarray}
    be the difference between the bounds in \cref{lem: bound s event}, we have that
    \begin{IEEEeqnarray}{lcl}
        \lim_{R^k \to \infty} \delta = 0.
     \end{IEEEeqnarray}
\end{proof}
Note that, at this point, no explicit mapping is considered.
One mapping option is to completely avoid transmission map combinations, somehow building orthogonal maps.
For example, suppose you want to add a third sensor node to the network. 
In this case, in order to build orthogonal transmission maps, there would be $R^k-2^k$ transmission maps available for the third sensor node ($2^k$ is the minimum number of possible combinations between two transmission maps) at best.
If a fourth sensor node was added, there would be $R^k-3^k$ transmission maps available for the fourth sensor node at best. 
Note how drastic this reduction is as new sensor nodes are added to the network.
Furthermore, it is clear that we would have a limit of $R$ node sensors operating with orthogonal transmission maps.
%
%
Therefore, orthogonal transmission maps would either imply an extremely large $R$ or a very low number of sensors $N$.
The second option is to randomly select transmission maps, in which case transmission maps may overlap, and thus, all $R^k$ are feasible; if $R^k \gg T$, then the likelihood of transmission map combinations that lead to errors is very low.
The third option is to build maps that try to avoid such overlaps but do not entirely forbid them.
%
In this way, we would have a certain degree of freedom (or, equivalently, a prohibition degree) over the mapping.
Going back to the previous example, the addition  of the third sensor node in the network would lead to a reduction in the maximum number of sensor nodes smaller than $3^k$, as small as the adopted degree of freedom.
In any case, if we consider $R^k$ fixed, any restrictive solution that reduces such sample space from where the transmission maps are drawn  would lead to a poorer performance.
%
Several rules can be adopted for the creation of transmission maps, but for simplicity, we adopt a random mapping.

The proposed random transmission map selection is presented in the following definition.
\begin{definition}[Random map]
Let $\mathbf{C}_{\mathcal{E}}$ be the binary matrix  with the element $c_{i,j} \in \{0,1\}$ that denotes a transmission ($c_{i,j}=1 $) or not ($c_{i,j}=0 $) during the $j\text{th}$ energy slot of the $i\text{th}$ subsymbol, to be built as follows.
For each row $i$, only one element $c_{i,j}$ is randomly set to 1. 
The choice of the element $c_{i,j}$ set to $1$ follows a uniform distribution.
If one or more code words have the same matrix $\mathbf{C}_{\mathcal{E}}$, one must repeat the mapping process for them until there is no repeated matrix $\mathbf{C}_{\mathcal{E}}$.
\end{definition}

\subsection{Conventional Mapping} \label{sec: Conventional}
%

The main difference from the proposed mapping based on energy slots is that the proposed system has only two states: ``0'' and ``1''.
While modulations like \ac{BPSK} or even \ac{OOK} use a similar binary map, the receiver has an additional, preliminary state to detect: the channel idleness. 
In the case of traditional  digital modulation, the sink needs to detect (i) if the channel is in the state ``idle,'' and, if not, (ii) what bit has been transmitted.
The proposed random map assumes that the channel state ``idle'' and ``0'' are the same, building on an ``informative silence.''


Regarding the \ac{MAC} protocol for our comparison benchmark, we opted for the simplest channel partitioning that provides the same channel capacity, namely \ac{TDMA}~\cite{BookTorrieri}.
Note that \ac{FDMA} and \ac{CDMA} could also be employed, but we prefer \ac{TDMA} because of its simplicity when presenting numerical results.
We can implement it directly into our physical layer as follows. 
Let $N_\text{t}$ be the number of sensor nodes. 
Since $R=N_\text{t}$ and suitably adjusting the transmission powers, one obtains a classic \ac{TDMA} (or \ac{FDMA}) system. 
%
This channel partitioning \ac{MAC} protocol requires an unnecessarily large resource volume for an alarm system scenario that uses event-triggered data transmission, as discussed in the previous sections.

In this case, one could expect that the occurrence of an alarm is uncommon, and thus, the sensor nodes will rarely occupy the resources allocated to them. 
Thereby, channel partitioning \ac{MAC} protocols would lead to a system with high channel idle rates and overallocation of resources. 
One could argue in favor of channel partitioning MAC protocols with dynamic resource allocation.
In general terms, the available resources are allocated on-demand to the sensor nodes for these MAC protocols. 
Therefore, the sensor nodes must somehow request the use of the resource. 
The control load generated only in the resource request stage would be as costly as the direct sending of information about the event itself, because when requesting the allocation of transmission resources, the sensor node would indirectly be informing the sink about the event.
In addition, the increase in latency is evident when adopting dynamic channel partitioning MAC protocols.

For comparison purposes, let us suppose that the number of available resources is insufficient for a network containing $N_\text{t}$ sensor nodes to employ collision-free \ac{TDMA}, i.e., we have $R<N_\text{t}$.
The division of $R$ available resources for this case is carried out as follows. 
Let $\mathbb{S}$ be a set of $N_\text{t}$ code words and $\mathbb{S}_i$, $i \in \{1, 2,\dotsc, R\}$, be a subset of $\mathbb{S}$ such that $\mathbb{S}_i \cap \mathbb{S}_j = \emptyset$ $\forall j\neq i$. 
We assume also that the number of elements in each subset $\mathbb{S}_i$ is as close to $N_\text{t}/R$ as possible.
In order to send a code word from the set $\mathbb{S}_i$, the sensor node must transmit a \ac{BPSK} symbol during the $i\text{th}$ energy slot of each subsymbol.
Note that a complete transmission of a code word requires $k$ \ac{BPSK} symbols and thus, $k$ subsymbols.

In addition to \ac{TDMA}, we also adopted slotted ALOHA for comparison.
In this case, we assume that the slot has a duration of $k \tau /R$ seconds, and the packets have a fixed size of $k$ \ac{BPSK} symbols, thereby maintaining the same power consumption and bandwidth admitted for \ac{TDMA} and \ac{SFC}.
%
\section{Performance Analysis}
\label{sec: Perform}
%

Consider the scenario proposed in Fig. \ref{fig: scenario} assuming that the $N$ physical processes are statistically independent and that each sensor node monitors exclusively one signal.
We also assume that the time interval between the occurrence of the same alarm is an exponential variate of mean $N \tau/ \lambda$. 
Therefore, the total of observed events across all sensor nodes within an interval $\mathcal{T}_n$ follow a Poisson distribution of mean $\lambda$.
Let $C_n=\sum_{\mathcal{E}} \theta_{\mathcal{E}}[n]$ be the sum of the number of alarm signals within the interval $\mathcal{T}_n$.
Because the multiple occurrences of the same event within an interval $\mathcal{T}_{n-1}$ are signaled by $\theta_{\mathcal{E}}[n]=1$, $C_n$ does not follow a Poisson distribution. 
However, because $N/ \lambda \gg 1$, the probability that the time between the occurrence of two consecutive alarms for the same signal is smaller than $\tau$ is negligible.
Under this condition, we can say that $C_n \sim \text{Poi}(\lambda)$, as we assume hereafter.

The total number of accesses to the $n\text{th}$ subsymbol is given by $A_n=\sum_{i=n-k+1}^{n}C_i$ $\forall n \geq  k$.
It is easy to show that $A_n$ is a Poisson variate of mean $k \lambda$ because the random variables $C_n$ are independent and identically distributed. 
Note that $A_{n}$ and $A_{n \pm i}$, $i\in \{1,2,\dotsc,k-1\}$, are dependent random variables, and that $T=C_n$ for the $n$th radio symbol.
Let us define the vector $\textbf{A}_n=(A_n,A_{n+1},\dotsc,A_{n+k-1})$ containing the number of accesses to each subsymbol of the $n\text{th}$ radio symbol.
To evaluate the probability $P[\textbf{A}_n]$, we must first calculate the probability of $\textbf{A}_n$ given $C^*_n=(C_{n}, \dotsc,C_{n-k+2})$.
%
This conditional probability can be calculated as
\begin{IEEEeqnarray}{lCl}\label{eq: P[An,An+1|C]}
	P[\textbf{A}_n|C^*_n] &= p_{\text{p}}(A_n-\mathcal{C}_n,\lambda) p_{\text{p}}(A_{n+1}-\mathcal{C}_n,\lambda) 
	\prod_{i=n+2}^{n+k-1} p_{\text{p}}(A_i-A_{i-1}+C_{i-k},\lambda), 
\end{IEEEeqnarray}
where
\begin{IEEEeqnarray}{lcl}
    p_{\text{p}}(a,b) = \frac{b ^{a} \exp (-b )}{\Gamma(1+a)}
\end{IEEEeqnarray}
is a Poisson \ac{pmf}, $\mathcal{C}_n=\sum_{i=n-k+2}^{n} C_i$, and $\Gamma(\cdot)$ is the Gamma function~\cite{bookGrandshteyni}.
In order to maintain the mathematical rigor and, at the same time, have the freedom to adopt any values for $A_n$, $A_{n+1}$, and $C_{n-k+1}$, we opted to define $p_{\text{p}}(\cdot,\cdot)$ using the Gamma function in the denominator instead of the factorial.
By taking an average over all possible values of $C^*_n$, the probability of $\textbf{A}_n$ is obtained as
\begin{IEEEeqnarray}{lcl}
    P[\textbf{A}_n] = \sum_{C^*_n}\text{P}[\textbf{A}_n|C^*_n] P[C^*_n],
\end{IEEEeqnarray}
where 
\begin{IEEEeqnarray}{lcl}
    P[C^*_n] = \prod_{i=n-k+2}^{n}p_{\text{p}}(C_i,\lambda).
\end{IEEEeqnarray}


We will now deal with the probability of getting $U_n$ energy slots filled at the $n$th subsymbol given $A_n$. 
This probability can be assessed using the counting method.
We have $R^{A_n}$ equiprobable outcomes for $A_n$ sensor nodes sharing a subsymbol. 
The number of ways to partition a set of $A_n$ objects into $U_n$ nonempty subsets is, by definition, the Stirling number of the second kind, denoted by $\stirling{A_n}{U_n}$.
There are also $R!/(R-U_n)!$ arrangements of the filled energy slots. 
Hence, the probability of having $U_n$ given $A_n$ filled energy slots is given by
\begin{IEEEeqnarray}{lcl}
    P[U_n|A_n] = \frac{R!}{R^{A_n}(R-U_n)!} \stirling{A_n}{U_n}.
\end{IEEEeqnarray}

Let us define the vector $\textbf{U}_n=(U_n,U_{n+1},\dotsc,U_{n+k-1})$ containing the number of filled energy slots for each subsymbol of the $n\text{th}$ radio symbol. 
The probability of $\textbf{U}_n$ given $\textbf{A}_n$ can be directly calculated by
\begin{IEEEeqnarray}{lcl}\label{eq: prob Un given An}
    P[\textbf{U}_n|\textbf{A}_n] = \prod_{i=0}^{k-1} P[U_{n+i}|A_{n+i}].
\end{IEEEeqnarray}
Besides, the probability of $\textbf{U}_n$ is given by%
\begin{IEEEeqnarray}{lcl}
    P[\textbf{U}_n] = \sum_{\textbf{A}_n} P[\textbf{U}_n|\textbf{A}_n] P[\textbf{A}_n].
\end{IEEEeqnarray}

We are now ready to present the transmission error probabilities for the three different transmission approaches.
\begin{proposition}[Transmission error probability]
    The transmission error probabilities of \ac{SFC}, slotted ALOHA (sALOHA), and TDMA are given by
    \begin{IEEEeqnarray}{lcl}
        \label{eq: prob error final}
        \hspace{-3ex} P_{\emph{SFC}}^{\hat{\Theta}} = \sum_{T,\textbf{U}_n} P_{\emph{e}}\left( \prod_{i=0}^{k-1}U_{n+i}, T\right) P[\textbf{U}_n]p_\text{p}(T,\lambda),
    \end{IEEEeqnarray}
    \begin{IEEEeqnarray}{lcl}
        \label{eq: prob error sALOHA}
        P_{\emph{sALOHA}}^{\hat{\Theta}} =1-\frac{(k \lambda +R)}{R} \exp \left(-\frac{k \lambda }{R} \right),
    \end{IEEEeqnarray}
    and
    \begin{IEEEeqnarray}{lcl}
    \label{eq: prob error TDMA}
        P_{\emph{TDMA}}^{\hat{\Theta}} = 1-\left[\frac{\lambda}{R} \exp \left( -\frac{2 k \lambda }{R} \right)+\exp \left( -\frac{\lambda }{R} \right) \right]^R.
    \end{IEEEeqnarray}
\end{proposition}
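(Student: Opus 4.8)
The plan is to prove the three formulas one at a time, each by specializing a combinatorial/Poisson model already set up above. For the \ac{SFC} expression \eqref{eq: prob error final}, the starting point is the identity $Q=\prod_{i=0}^{k-1}U_{n+i}$: the sink declares a transmission map ``present'' in the $n$th radio symbol iff, in every one of the $k$ subsymbols, the single energy slot that the map selects lies among the filled ones, so the count of present maps factorizes over subsymbols. Conditioned on $\mathbf{U}_n$ (equivalently on $Q$) and on the number $T=C_n$ of events starting at time $n$, the remaining randomness reduces to the uniform placement of the $N-T$ untransmitted code words among the remaining candidate maps, so the conditional error probability is $P_{\text{e}}(Q,T)=1-P[E=0\,|\,Q,T]$ of \eqref{eq: prob E} and \eqref{eq: prob Error} --- there being no missed detections, only false alarms, as already exploited in the proof of \cref{lem: prob s event}. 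It then remains to average $P_{\text{e}}\!\left(\prod_{i}U_{n+i},T\right)$ over the joint law of $(\mathbf{U}_n,T)$ with $T=C_n\sim\text{Poi}(\lambda)$ and $\mathbf{U}_n$ distributed as derived above; treating $\mathbf{U}_n$ and $T$ as independent --- the contribution of $C_n$ to each $A_{n+i}$ being negligible in the sparse-event regime $N/\lambda\gg 1$ assumed throughout --- gives \eqref{eq: prob error final}.

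For slotted ALOHA, the plan is to use the thinning property of the Poisson process: over one radio-symbol duration $k\tau$ the number of events is $\text{Poi}(k\lambda)$, and splitting this duration into $R$ slots of length $k\tau/R$ with each event routed to its slot independently and uniformly makes the number of packets in any given slot $\text{Poi}(k\lambda/R)$. A slot is decoded correctly iff it carries at most one packet (a collision destroys the identities of all packets in it, and slotted ALOHA has no false positives), so the error probability is $1-p_{\text{p}}(0,k\lambda/R)-p_{\text{p}}(1,k\lambda/R)=1-\frac{k\lambda+R}{R}\exp(-k\lambda/R)$, which is \eqref{eq: prob error sALOHA}. For \ac{TDMA}, the plan is to split the $N_\text{t}$ sensors into $R$ disjoint groups, one per resource; their event streams are independent Poisson processes of rate $\lambda/R$ per subsymbol, which accounts for the outer $R$th power. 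Within a group, a transmission starting at time $n$ spans $k$ consecutive subsymbols, hence is received cleanly iff it is the unique group event at time $n$ and no other group event falls inside the surrounding collision window (of width of order $2k$ subsymbols); adding the ``no group event at time $n$'' case gives the per-group success probability $\exp(-\lambda/R)+\frac{\lambda}{R}\exp(-2k\lambda/R)$, and independence across the $R$ groups yields \eqref{eq: prob error TDMA}.

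I expect the \ac{SFC} case to be the main obstacle: one has to argue carefully that fixing $\mathbf{U}_n$ and $T$ leaves exactly the random code-word placement underlying \eqref{eq: prob E}, so that it applies verbatim with $Q=\prod_i U_{n+i}$, and then justify replacing the joint law of $(\mathbf{U}_n,T)$ by the product $P[\mathbf{U}_n]\,p_{\text{p}}(T,\lambda)$. A lesser subtlety is the TDMA collision bookkeeping --- pinning the exponent at exactly $2k\lambda/R$ requires committing to a precise rule for when two $k$-subsymbol transmissions of the same group count as colliding.
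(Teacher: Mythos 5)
Your proposal is correct and follows essentially the same route as the paper: the \ac{SFC} formula is obtained by averaging $P_{\text{e}}(Q,T)$ from \eqref{eq: prob Error} with $Q=\prod_{i=0}^{k-1}U_{n+i}$, slotted ALOHA is the probability of two or more Poisson$(k\lambda/R)$ arrivals in a slot, and TDMA uses the per-group decomposition with the $(2k-1)\tau$ collision window and the $R$th power across groups. In fact you supply more justification than the paper does for the \ac{SFC} case (which it dismisses as ``trivial''), and the independence of $\mathbf{U}_n$ and $T$ that you flag is indeed assumed silently in \eqref{eq: prob error final} rather than proved.
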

\begin{proof}
    The \ac{SFC} transmission error probability is trivially achieved by using \eqref{eq: prob Error} and noting that $Q = \prod_{i=0}^{k-1}U_{n+i}$.
    
    The average error probability for slotted ALOHA, in our context, is directly given by the collision probability. 
    In other words, it is the probability of having two or more sensor nodes transmitting in a slot.
    Accordingly, in mathematical terms we have
    \begin{IEEEeqnarray}{lcl}
        P_{\text{sALOHA}}^{\hat{\Theta}} &=& 1-p_\text{p}(0,k \lambda/R) - p_\text{p}(1,k \lambda/R) \nonumber \\
        &=&1-\frac{(k \lambda +R)}{R} \exp \left(-\frac{k \lambda }{R} \right).
    \end{IEEEeqnarray}

    Regarding the \ac{TDMA} system defined in Sec. \ref{sec: Conventional}, we consider here only an arbitrary subset $\mathbb{S}_i$ of code words.
    Further, let random variables $E$ and $T$ represent the number of erroneously estimated code words and the number of transmitted code words, respectively, for that arbitrary subset $\mathbb{S}_i$.
    In this way, we can write
    \begin{IEEEeqnarray}{lcl}
    \label{eq: prob error S_i}
        P[E>0]=\sum_T P[E>0|T]P[T].
    \end{IEEEeqnarray}
    Furthermore, some conclusions can be drawn: (i) there is no error if no sensor node transmits; (ii) given that a sensor node has transmitted, communication is only error-free if no further transmission is initiated within the period of $(2k-1) \tau$ seconds (i.e., no collision); and (iii) given that two or more sensor nodes transmitted simultaneously, we will always have errors.
    Transcribing these conclusions into mathematical terms, we have
    \begin{align} \label{eq: E|T}
        \text{(i) }P[E>0|T=0]&=0, \nonumber\\
        \text{(ii) }P[E>0|T=1]&=1-p_\text{p}(0,(2k-1)\lambda/R), \\
        \text{(iii) }P[E>0|T>1] &= 1. \nonumber
    \end{align}

    The average error probability for TDMA is given by
    \begin{IEEEeqnarray}{lcl}
    \label{eq: E TDMA}
        P_{\text{TDMA}}^{\hat{\Theta}} = 1-(1-P[E>0])^R.
    \end{IEEEeqnarray}
    Plugging \eqref{eq: prob error S_i} and \eqref{eq: E|T} into \eqref{eq: E TDMA} and after some mathematical manipulation, we have
    \begin{IEEEeqnarray}{lcl}
        P_{\text{TDMA}}^{\hat{\Theta}} = 1-\left[\frac{\lambda}{R} \exp \left( -\frac{2 k \lambda }{R} \right)+\exp \left( -\frac{\lambda }{R} \right) \right]^R.
    \end{IEEEeqnarray}
\end{proof}

\begin{proposition}[Average error probability for a single event]
    \label{pre: error s event}
    The average error probability for a single event function can be lower and upper bounded by
    \begin{IEEEeqnarray}{lcl}
    \label{eq: error prob theta}
        \hspace{-3ex} \sum_{i=1}^{2^k} (-1)^{\alpha_i} \exp \left(  \frac{\beta_i \lambda}{(-R)^{\alpha_i}} \right)- \frac{\lambda}{R^k} \leq P_{\emph{e}}^{\hat{\theta}_{\mathcal{E}}} \exp \left(\frac{\lambda}{N} \right) \leq  \frac{R^k\sum_{i=1}^{2^k} (-1)^{\alpha_i} \exp \left(  \frac{\beta_i \lambda}{(-R)^{\alpha_i}} \right)- \lambda}{R^k-N}.
    \end{IEEEeqnarray}
\end{proposition}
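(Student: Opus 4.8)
The plan is to obtain the two inequalities by \emph{averaging}, over the traffic model of \cref{sec: Perform}, the pointwise bounds already established in \cref{lem: bound s event}. I would deliberately start from \cref{lem: bound s event} rather than from the exact expression of \cref{lem: prob s event}: the latter carries a factor $1/(R^k-T)$ with the random $T$ in the denominator, whose average has no closed form, whereas \cref{lem: bound s event} has the deterministic denominators $R^k$ and $R^k-N$, so a termwise average is immediate and produces bounds (not an equality), precisely as the Proposition claims.

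Concretely, \cref{lem: bound s event} gives, for every realization of $(Q,T)$,
\[
\frac{Q-T}{R^k}\,P[\theta_{\mathcal{E}}=0]\ \le\ P_{\text{e}}^{\hat{\theta}_{\mathcal{E}}}(Q,T)\ \le\ \frac{Q-T}{R^k-N}\,P[\theta_{\mathcal{E}}=0].
\]
Since the unconditional $P_{\text{e}}^{\hat{\theta}_{\mathcal{E}}}$ is the expectation of $P_{\text{e}}^{\hat{\theta}_{\mathcal{E}}}(Q,T)$ over the joint law of $(Q,T)$, and $R^k$, $N$ are constants, linearity of expectation gives
\[
\frac{\mathbb{E}[Q]-\mathbb{E}[T]}{R^k}\,P[\theta_{\mathcal{E}}=0]\ \le\ P_{\text{e}}^{\hat{\theta}_{\mathcal{E}}}\ \le\ \frac{\mathbb{E}[Q]-\mathbb{E}[T]}{R^k-N}\,P[\theta_{\mathcal{E}}=0].
\]
Next I would pin down the three ingredients: from the traffic assumptions $T=C_n\sim\text{Poi}(\lambda)$, so $\mathbb{E}[T]=\lambda$; since the time between successive events of a given sensor is exponential with mean $N\tau/\lambda$, the probability that that sensor produces no event in a length-$\tau$ window is $P[\theta_{\mathcal{E}}=0]=\exp(-\lambda/N)$; hence multiplying the display through by $\exp(\lambda/N)$ cancels that factor. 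What then remains is to prove $\mathbb{E}[Q]=R^k\sum_{i=1}^{2^k}(-1)^{\alpha_i}\exp\!\bigl(\beta_i\lambda/(-R)^{\alpha_i}\bigr)$, after which substituting $\mathbb{E}[T]=\lambda$ reproduces both sides of \eqref{eq: error prob theta} verbatim.

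The core of the argument is therefore the evaluation of $\mathbb{E}[Q]$ with $Q=\prod_{i=0}^{k-1}U_{n+i}$. I would condition on the access-count vector $\textbf{A}_n$: by \eqref{eq: prob Un given An} the filled-slot counts are conditionally independent across subsymbols, and the balls-in-bins (occupancy) identity gives $\mathbb{E}[U_{n+i}\mid A_{n+i}]=R\bigl(1-(1-1/R)^{A_{n+i}}\bigr)$, so $\mathbb{E}[Q\mid\textbf{A}_n]=\prod_{i=0}^{k-1}R\bigl(1-(1-1/R)^{A_{n+i}}\bigr)$. Expanding this product of $k$ two-term factors yields $2^k$ terms, one per subset $S\subseteq\{0,\dots,k-1\}$ of subsymbols that picks the $(1-1/R)^{A_{n+i}}$ factor, the $S$-term being $(-1)^{|S|}R^k(1-1/R)^{\sum_{i\in S}A_{n+i}}$. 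Averaging over $\textbf{A}_n$ finishes the job: each $A_{n+i}$ is a sum of i.i.d.\ $\text{Poi}(\lambda)$ counts $C_m$, so $\sum_{i\in S}A_{n+i}$ is a linear combination of independent Poisson variables whose probability generating function at $1-1/R$ is $\exp$ of a linear-in-$\lambda$ expression fixed by the overlap multiplicities of the $k$-subsymbol windows; collecting these signed exponentials over the $2^k$ subsets gives the stated closed form, with $\alpha_i,\beta_i$ the combinatorial data (subset cardinality and access-multiplicity pattern) of the $i$-th subset.

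The main obstacle is exactly this last collapse: rewriting $\mathbb{E}\bigl[(1-1/R)^{\sum_{i\in S}A_{n+i}}\bigr]$ as the single term $\exp\!\bigl(\beta_i\lambda/(-R)^{\alpha_i}\bigr)$ and then keeping all the signs $(-1)^{\alpha_i}$ and powers of $-R$ straight across all $2^k$ subsets. Because the radio symbols overlap (the $A_{n+i}$ share the counts $C_m$, the dependence flagged in \cref{sec: Perform}), the bookkeeping of which $C_m$ enters which subset sum, and with what multiplicity, is where the care is required; by contrast, the averaging of the pointwise bounds and the identification of $\mathbb{E}[T]$ and $P[\theta_{\mathcal{E}}=0]$ are routine.
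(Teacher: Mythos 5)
Your proposal is correct and follows essentially the same route as the paper: averaging the pointwise bounds of \cref{lem: bound s event} (whose deterministic denominators make the termwise expectation immediate), identifying $\mathbb{E}[T]=\lambda$ and $P[\theta_{\mathcal{E}}=0]=\exp(-\lambda/N)$, and evaluating $\mathbb{E}[Q]$ by conditioning on $\textbf{A}_n$, expanding the product of occupancy expectations into $2^k$ signed terms, and collapsing each via the Poisson generating function into $\exp(\beta_i\lambda/(-R)^{\alpha_i})$. Your intermediate display with $(\mathbb{E}[Q]-\mathbb{E}[T])$ over the two distinct denominators $R^k$ and $R^k-N$ is in fact cleaner than the paper's corresponding equation, which contains a typographical slip there.
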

\begin{proof}
    Using \cref{lem: bound s event}, we have
    \begin{IEEEeqnarray}{lcl}
    \label{eq: error prob theta mean F}
        \left(\frac{\mathbb{E}[Q]+T}{R^k-N} \right) \exp \left(-\frac{\lambda}{N} \right) \leq \mathbb{E}[P_{\text{e}}^{\hat{\theta}_{\mathcal{E}}}(Q,T)] \leq \left(\frac{\mathbb{E}[Q]+T}{R^k-N} \right) \exp \left(-\frac{\lambda}{N} \right),
    \end{IEEEeqnarray}
    with $\mathbb{E}[\cdot]$ denoting the expectation operator.
    
    Since $Q=\prod_{i=0}^{k-1}U_{n+i}$ and using \eqref{eq: prob Un given An}, we can write the expected value of $Q$ given $\textbf{A}_n$ as
    \begin{IEEEeqnarray}{lcl}
    \label{eq: expcted F}
        \mathbb{E}[Q|\textbf{A}_n]=\prod_{i=0}^{k-1}\mathbb{E}[U_{n+i}|A_{n+i}].
    \end{IEEEeqnarray}
    
    The expected value of $U_{n}$ given $A_{n}$ is given by
    \begin{IEEEeqnarray}{lcl}
    \label{eq: expcted Un given An}
        \mathbb{E}[U_n|A_n]=R\left(1-\left(1-\frac{1}{R}\right)^{A_{n}}\right).
    \end{IEEEeqnarray}
    
    Then, we have 
    \begin{IEEEeqnarray}{lcl}
    \label{eq: expcted F 0}
        \mathbb{E}[Q]&=\sum_{\textbf{A}_n} \prod_{i=0}^{k-1} R\left(1-\left(1-\frac{1}{R}\right)^{A_{i}}\right)P[\textbf{A}_n] \nonumber \\
        &= R^k \sum_{\textbf{A}_n} \sum_{i=1}^{2^k} (-1)^{\alpha_i} \left(1-\frac{1}{R}\right)^{\mathcal{q}_{i} \cdot \textbf{w}} P[\textbf{A}_n],
    \end{IEEEeqnarray}
    with the dot product $\mathcal{q}_{i} \cdot \textbf{w}$ representing all the $2^k$ exponents coming from the product of the terms $\left(1-\left(1-\frac{1}{R}\right)^{A_{i}}\right)$, $\alpha_i = \max \mathcal{q}_{i}$ is the largest value among the elements of $\mathcal{q}_{i}$, and $\textbf{w}=(C_{n-k+1},C_{n-k+2},\dotsc,C_{n+k-1})^\intercal$, where ``$^\intercal$'' denotes the transpose of a matrix or a vector.
     
    The possible values of $\mathcal{q}_{i}$ could be obtained as follows.
    Suppose there are $k$ sets, each containing two vectors of the length $2k-1$. 
    In all $k$ sets there is a null vector, i.e., $\textbf{0}=(0_1, 0_2, \dotsc, 0_{2k-1})$.
    The other vector of the set $l\in\{1,2,\dotsc,k\}$ has the first $l-1$ elements as zero followed by $k$ ones, and the last $k-l$ elements are zeros.
    $\mathcal{q}_{i}$ is obtained by any linear combination that takes a single vector from each of the $k$ sets described above.
    As an example, we present below a matrix whose rows represent the $2^k$ possible $\mathcal{q}_{i}$ vectors for $k=3$.
    \begin{align*}
        \begin{pmatrix}
            0  &  0 &    0&     0&     0\\
            1  &   1   &  1  &   0  &   0\\
            0  &  1  &   1  &   1  &   0\\
            1  &  2  &   2  &   1  &   0\\
            0  &   0  &   1  &   1  &   1\\
            1  &   1  &   2  &   1  &   1\\
            0  &   1  &   2  &   2  &   1\\
            1  &   2 &    3  &   2  &  1\\
        \end{pmatrix}
    \end{align*}
    In this context, e.g., $\mathcal{q}_{4}=(1, 2, 2, 1, 0)$ for $i=4$.
    
    Taking only the dependent terms of $\textbf{A}_n$ in \eqref{eq: expcted F 0}, we can write
    \begin{IEEEeqnarray}{lcl}
    \label{eq: sum An 0}
        \sum_{\textbf{A}_n} \left(1-\frac{1}{R}\right)^{\mathcal{q}_{i} \cdot \textbf{w}} P[\textbf{A}_n] = \exp \left(  \frac{ \beta_i \lambda}{(-R)^{\alpha_i}}  \right),
    \end{IEEEeqnarray}
    where
    \begin{align*}
        P[\textbf{A}_n]=\prod_{j=n-k+1}^{n+k-1}p_{\text{p}}(C_j,\lambda)
    \end{align*}
    and
    \begin{align*}
        \beta_i = \sum_{l=1}^{\alpha_i} (-R)^{\alpha_i-l} \sum_{j=1}^{2k-1} \frac{\mathcal{q}_{i,j}!}{(\mathcal{q}_{i,j}-l)!l!},
    \end{align*}
    with $\mathcal{q}_{i,j}$ representing the $j$th element of $\mathcal{q}_{i}$.

    Finally, by substituting \eqref{eq: expcted F 0} and \eqref{eq: sum An 0} into \eqref{eq: error prob theta mean F}, and after performing an average over all values of $T$, we can write the average error probability for a single event function as
    \begin{IEEEeqnarray}{lcl}
        \hspace{-3ex} \sum_{i=1}^{2^k} (-1)^{\alpha_i} \exp \left(  \frac{\beta_i \lambda}{(-R)^{\alpha_i}} \right)- \frac{\lambda}{R^k} \leq P_{\text{e}}^{\hat{\theta}_{\mathcal{E}}} \exp \left(\frac{\lambda}{N} \right) \leq  \frac{R^k\sum_{i=1}^{2^k} (-1)^{\alpha_i} \exp \left(  \frac{\beta_i \lambda}{(-R)^{\alpha_i}} \right)- \lambda}{R^k-N}.
    \end{IEEEeqnarray}
\end{proof}

We will now define the metric used to assess the performance of our physical layer. 
We have that $P[\hat{\theta}_{\mathcal{E}}=1|{\theta}_{\mathcal{E}}=1]$ and $P[\hat{\theta}_{\mathcal{E}}=0|{\theta}_{\mathcal{E}}=0]$ represent the reliability provided by the physical layer regarding the occurrence and nonoccurrence of an event, respectively.
We will therefore define an efficiency measure in order to quantitatively evaluate the performance of our physical layer to deliver the  desired information about the monitored signals to the sink as follows.
%
%

\begin{definition}[Efficiency metric]\label{def: Efficiency}
We can define the efficiency metric as a function of the probabilities $P[\hat{\theta}_{\mathcal{E}}=1|{\theta}_{\mathcal{E}}=1]$ and $P[\hat{\theta}_{\mathcal{E}}=0|{\theta}_{\mathcal{E}}=0]$  as
\begin{align}
    \mathcal{F} \triangleq P[\hat{\theta}_{\mathcal{E}}=1|{\theta}_{\mathcal{E}}=1] \; P[\hat{\theta}_{\mathcal{E}}=0|{\theta}_{\mathcal{E}}=0].
\end{align}
\end{definition}
%
%
%
The rationale for using such a metric is explained next.
Suppose that a given event is rare and that the physical layer always interprets that nothing was transmitted. 
In this case, we would have a high average of correctly received transmission maps.
However, the sink would never be able to detect alarm events.
On the other hand, by using the proposed metric, we would have $P[\hat{\theta}_{\mathcal{E}}=1|{\theta}_{\mathcal{E}}=1]=0$, and thus, $\mathcal{F}=0$; therefore, the efficiency $\mathcal{F}$ would capture this ineffectiveness of the physical layer in transmitting the desired event.
%

%
%

We can now calculate the efficiencies of the three studied strategies as presented next.
\begin{corollary}
    The efficiency $\mathcal{F}$ of \ac{SFC}, \ac{TDMA}, and slotted ALOHA are given by
    \begin{IEEEeqnarray}{lcl}
    \label{eq: F for SFC}
        \hspace{-3ex} \sum_{i=1}^{2^k} (-1)^{\alpha_i} \exp \left(  \frac{\beta_i \lambda}{(-R)^{\alpha_i}} \right)- \frac{\lambda}{R^k} \leq 1-\mathcal{F}_{\emph{SFC}} \leq  \frac{R^k\sum_{i=1}^{2^k} (-1)^{\alpha_i} \exp \left(  \frac{\beta_i \lambda}{(-R)^{\alpha_i}} \right)- \lambda}{R^k-N},
    \end{IEEEeqnarray}
    \begin{IEEEeqnarray}{lcl}
    \label{eq: F for TDMA}
        \mathcal{F}_{\emph{TDMA}}=\exp\!\left(-\frac{(2 k-1) \lambda }{R}\right)
    \end{IEEEeqnarray}
    and
    \begin{IEEEeqnarray}{lcl}
    \label{eq: F for s ALOHA}
        \mathcal{F}_{\emph{sALOHA}}=\exp\!\left(-\frac{(N-1) k \lambda }{N R}\right),
    \end{IEEEeqnarray}
    respectively.
\end{corollary}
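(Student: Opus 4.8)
The plan is to compute, for each of the three schemes, the two conditional probabilities $P[\hat{\theta}_{\mathcal{E}}=1|\theta_{\mathcal{E}}=1]$ and $P[\hat{\theta}_{\mathcal{E}}=0|\theta_{\mathcal{E}}=0]$ that appear in \cref{def: Efficiency}, and then to take their product.

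For the proposed SFC system I would start from the fact, already noted in the proof of \cref{lem: prob s event}, that a transmitted event is never missed, i.e.\ $P[\hat{\theta}_{\mathcal{E}}=0|\theta_{\mathcal{E}}=1]=0$; hence $P[\hat{\theta}_{\mathcal{E}}=1|\theta_{\mathcal{E}}=1]=1$ and $\mathcal{F}_{\text{SFC}}=P[\hat{\theta}_{\mathcal{E}}=0|\theta_{\mathcal{E}}=0]=1-P[\hat{\theta}_{\mathcal{E}}=1|\theta_{\mathcal{E}}=0]$. A single sensor observes a Poisson number of events of mean $\lambda/N$ per interval $\mathcal{T}_n$, so $P[\theta_{\mathcal{E}}=0]=\exp(-\lambda/N)$; combining this with the identity $P_{\text{e}}^{\hat{\theta}_{\mathcal{E}}}=P[\hat{\theta}_{\mathcal{E}}=1|\theta_{\mathcal{E}}=0]\,P[\theta_{\mathcal{E}}=0]$ from \cref{lem: prob s event} gives $1-\mathcal{F}_{\text{SFC}}=P_{\text{e}}^{\hat{\theta}_{\mathcal{E}}}\exp(\lambda/N)$, and the claimed two-sided bound is then obtained by substituting the bounds of \cref{pre: error s event} verbatim.

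For TDMA I would argue that an idle sensor's code word is never declared present: within a subset $\mathbb{S}_i$ a collision corrupts only the code words of the sensors that actually transmitted, so $P[\hat{\theta}_{\mathcal{E}}=0|\theta_{\mathcal{E}}=0]=1$. Conditioned on $\mathcal{E}$ transmitting, the reception is error-free precisely when no further transmission from $\mathbb{S}_i$ is initiated inside the $(2k-1)\tau$-second window that overlaps $\mathcal{E}$'s $k$ subsymbols; reusing the counting argument behind \eqref{eq: E|T}, that number of transmissions is Poisson of mean $(2k-1)\lambda/R$, so $P[\hat{\theta}_{\mathcal{E}}=1|\theta_{\mathcal{E}}=1]=p_{\text{p}}(0,(2k-1)\lambda/R)=\exp(-(2k-1)\lambda/R)$, and the product yields \eqref{eq: F for TDMA}. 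Slotted ALOHA is treated the same way: unique packet identities again give $P[\hat{\theta}_{\mathcal{E}}=0|\theta_{\mathcal{E}}=0]=1$, while conditioned on $\mathcal{E}$ transmitting in a slot of length $k\tau/R$, success requires that none of the other $N-1$ sensors transmit in that slot, an event whose probability is $\exp\!\big(-\tfrac{N-1}{N}\tfrac{k\lambda}{R}\big)$, giving \eqref{eq: F for s ALOHA}.

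I expect the main obstacle to be the justification of $P[\hat{\theta}_{\mathcal{E}}=0|\theta_{\mathcal{E}}=0]=1$ for the two benchmarks — one must argue that superimposed (colliding) waveforms can never be decoded as the code word or packet of a sensor that stayed silent — together with keeping the Poisson-rate bookkeeping consistent with the approximations already adopted in \cref{sec: Perform}, in particular the replacement of the per-subset ``other sensors'' rate $\lambda/R-\lambda/N$ by $\lambda/R$. Everything else reduces to routine algebra once the two conditional probabilities have been identified.
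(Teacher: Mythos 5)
Your proposal matches the paper's proof essentially step for step: both decompose $\mathcal{F}$ via \cref{def: Efficiency}, use $P[\hat{\theta}_{\mathcal{E}}=0|\theta_{\mathcal{E}}=1]=0$ for SFC together with the bounds of \cref{pre: error s event}, and for TDMA and slotted ALOHA set $P[\hat{\theta}_{\mathcal{E}}=0|\theta_{\mathcal{E}}=0]=1$ (collided packets are dropped) and identify $1-P[\hat{\theta}_{\mathcal{E}}=1|\theta_{\mathcal{E}}=1]$ with the conditional collision probability, evaluated as $1-p_{\text{p}}(0,(2k-1)\lambda/R)$ and $1-p_{\text{p}}(0,(N-1)k\lambda/(NR))$ respectively. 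The argument is correct and takes the same route as the paper.
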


\begin{proof}
    For \ac{SFC}, $P[\hat{\theta}_{\mathcal{E}}=1|{\theta}_{\mathcal{E}}=1]=1$ and $P[\hat{\theta}_{\mathcal{E}}=0|{\theta}_{\mathcal{E}}=0]=1-P_{\text{e}}^{\hat{\theta}_{\mathcal{E}}}\; \exp(\lambda/N)$. 
    In that way, the efficiency $\mathcal{F}$ of \ac{SFC} is obtained by using \cref{pre: error s event} into \cref{def: Efficiency}.
    
    Regarding \ac{TDMA} and slotted ALOHA, 
    we assume that in the event of a collision, the packets will be dropped by the receiver. 
    More precisely, we are assuming that the probability that the receiver recognizes a valid code word after a collision is negligible.
    Note that in this way, no code word is actually retrieved by the receiver in the event of a collision.
    Thereby, we have now that $P[\hat{\theta}_{\mathcal{E}}=0|{\theta}_{\mathcal{E}}=0]=1$ and $P[\hat{\theta}_{\mathcal{E}}=1|{\theta}_{\mathcal{E}}=1]=1-P[\hat{\theta}_{\mathcal{E}}=0|{\theta}_{\mathcal{E}}=1]$.
    Note that $P[\hat{\theta}_{\mathcal{E}}=0|{\theta}_{\mathcal{E}}=1]$ represents, in our context, for both TDMA and slotted ALOHA, the collision probability given that an arbitrary sensor node has transmitted.
    Hence, the efficiency $\mathcal{F}$ of TDMA and slotted ALOHA are given by
    \begin{IEEEeqnarray}{lcl}
        \mathcal{F}_{\text{TDMA}}&=&1-P[\hat{\theta}_{\mathcal{E}}=0|{\theta}_{\mathcal{E}}=1] \nonumber \\
        &=& 1-(1-p_{\text{p}}(0,(2k-1)\lambda / R)) \nonumber \\
        &=&\exp\!\left(-\frac{(2 k-1) \lambda }{R}\right)
    \end{IEEEeqnarray}
    and
    \begin{IEEEeqnarray}{lcl}
        \mathcal{F}_{\text{sALOHA}}&=&1-P[\hat{\theta}_{\mathcal{E}}=0|{\theta}_{\mathcal{E}}=1] \nonumber \\
        &=& 1-(1-p_{\text{p}}( 0,(N-1) k \lambda/(N R))) \nonumber \\
        &=&\exp\!\left(-\frac{(N-1) k \lambda }{N R}\right),
    \end{IEEEeqnarray}
    respectively.
\end{proof}


\section{Numerical Results} \label{sec: Numerical}
In this section, we evaluate the performance of the MAC protocols described above. 
%
%
In all figures illustrating the results, the markers represent the results obtained through simulation, and the lines are the analytical results obtained by the equations introduced above, i.e, \eqref{eq: F for SFC}\footnote{Because the bounds at \eqref{eq: F for SFC} are extremely tight, the results for the upper limit were omitted.}, \eqref{eq: F for TDMA} and \eqref{eq: F for s ALOHA}.
Numerical results were obtained through Monte Carlo simulations with the aid of Matlab software. 
TDMA and slotted ALOHA implement the modulation process described in Sec. \ref{sec: Conventional}. 
The \ac{SFC} follows the proposed model, described in Sec. \ref{sec: Random}.
We assume an ideal communication channel and postpone further analysis, such as the AWGN effect, fading, and phase error, for future work.
Thereby, the performance impact caused by collisions during transmissions will be more evident.

In Fig. \ref{fig: Eff x R}, the efficiency $\mathcal{F}$ for TDMA, slotted ALOHA and \ac{SFC} is plotted against the resource number $R$.
The $\lambda$ parameter varies as $\{ 0.1, 0.32 \}$, the code word length is $k=6$, and the number of sensor nodes is $N=64$. 
For all cases, as the number of resources allocated to the system increases, the efficiency $\mathcal{F}$ increases; however, the SFC system outperforms the other two.
In addition, the efficiency $\mathcal{F}$ of the \ac{SFC} system gets close to the optimal efficiency (i.e., $\mathcal{F}=1$) faster than the efficiency $\mathcal{F}$ of TDMA or slotted ALOHA. 
Note that the SFC system can outperform TDMA and slotted ALOHA even with a higher traffic load (i.e., three times higher $\lambda$).



The performance of the systems in terms of efficiency $\mathcal{F}$ versus $\lambda$ is shown in Fig. \ref{fig: Eff x lambda}. 
The parameters $R$, $k$, and $N$ are set to 11, 6, and 64, respectively.
For the range of values presented, whatever the traffic load (or $\lambda$) is, the \ac{SFC} system outperforms the other two in terms of efficiency $\mathcal{F}$.

Fig. \ref{fig: Eff x R and N} shows the system performance versus the number of resources $R$ with the ratio $N/R=6$ remaining constant.
Thereby, we can evaluate how systems behave when more users are admitted and mutually more resources are allocated to the system.
Again, the \ac{SFC} system outperforms TDMA and slotted ALOHA, and in addition, it manages to maintain a stable performance.
On the other hand, TDMA and slotted ALOHA experience a deterioration in efficiency $\mathcal{F}$ with increasing $R$ and $N$.
As the code word length increases (because of $k= \lceil \log_2 N \rceil$), the channel occupation time for transmitting one code word also increases.
Therefore, the probability of a collision in the TDMA and slotted ALOHA systems also increases.

\begin{figure}[t!]
    \centering
    \includegraphics[width=0.7\linewidth]{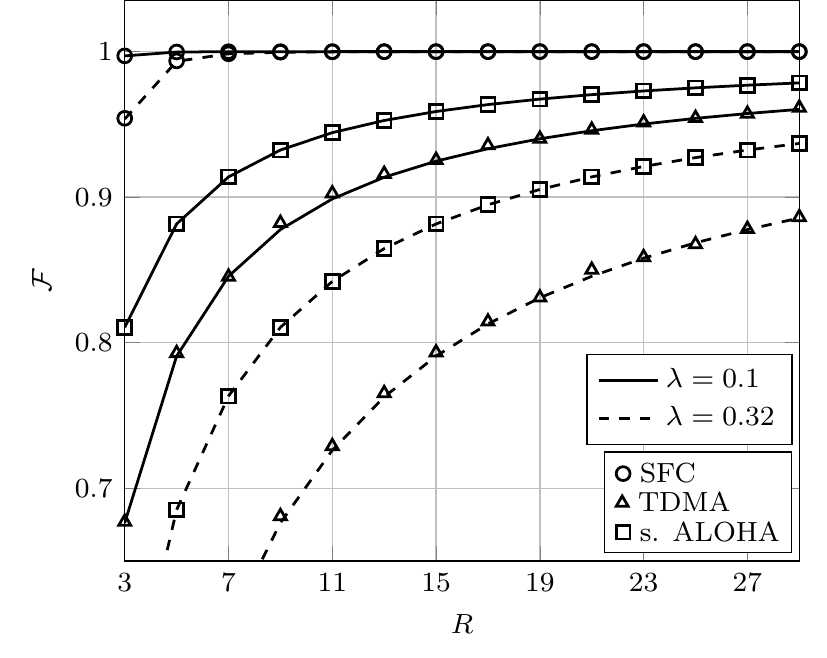}
    \caption{Efficiency $\mathcal{F}$ of TDMA, slotted ALOHA, and \ac{SFC} against $R$ for $k=6$, and $N=64$.}  
    \label{fig: Eff x R}
\end{figure}
\begin{figure}[t!]
    \centering
    \includegraphics[width=0.7\linewidth]{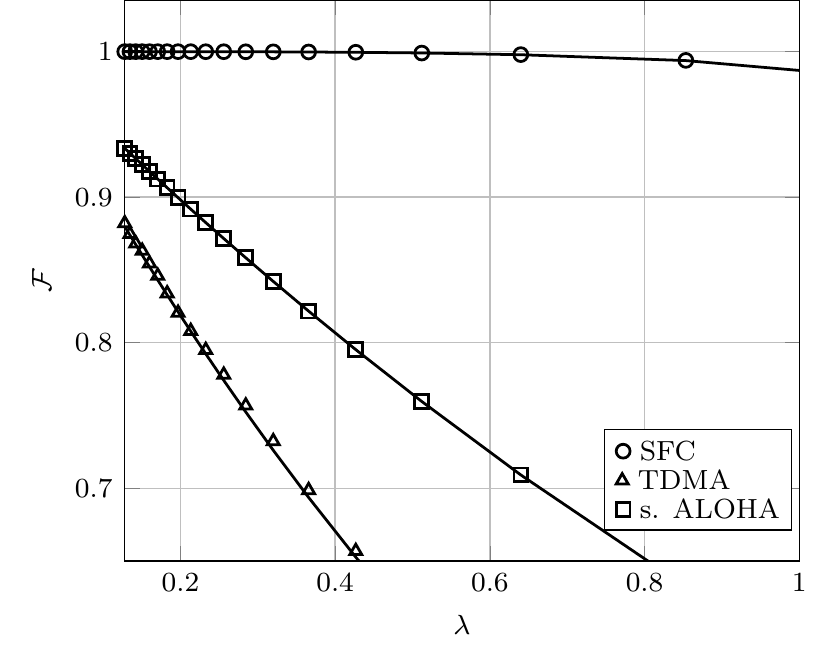}
    \caption{Efficiency $\mathcal{F}$ of TDMA, slotted ALOHA, and \ac{SFC} against $\lambda$ for $R=11$, $k=6$, and $N=64$.}
    \label{fig: Eff x lambda}
\end{figure}
\begin{figure}[t!]
    \centering
    \includegraphics[width=0.7\linewidth]{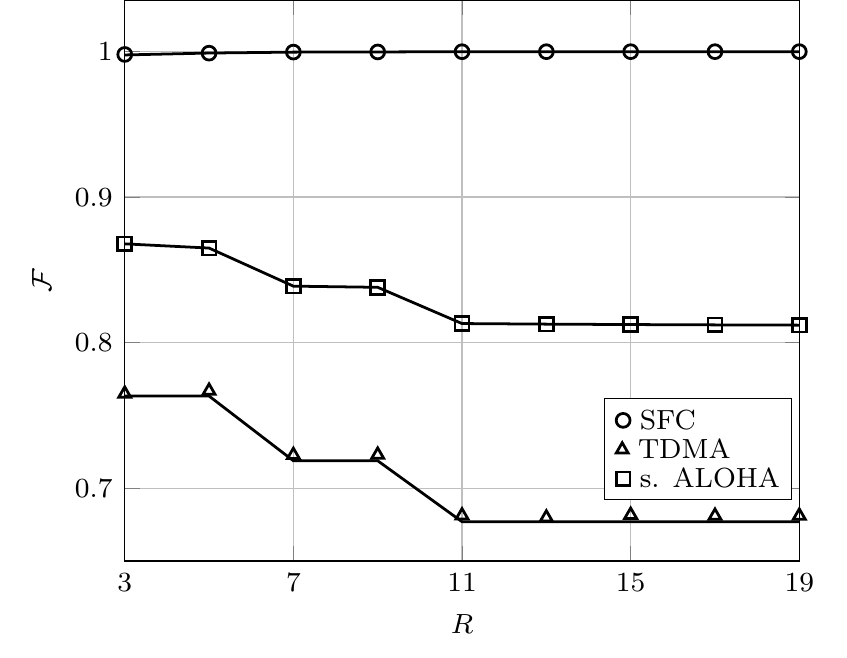}
    \caption{Efficiency $\mathcal{F}$ of TDMA, slotted ALOHA, and \ac{SFC} against $R$ for $N=6R$, and $\lambda=200/N$.}
    \label{fig: Eff x R and N}
\end{figure}

\begin{figure}[t!]
    \centering
    \includegraphics[width=0.7\linewidth]{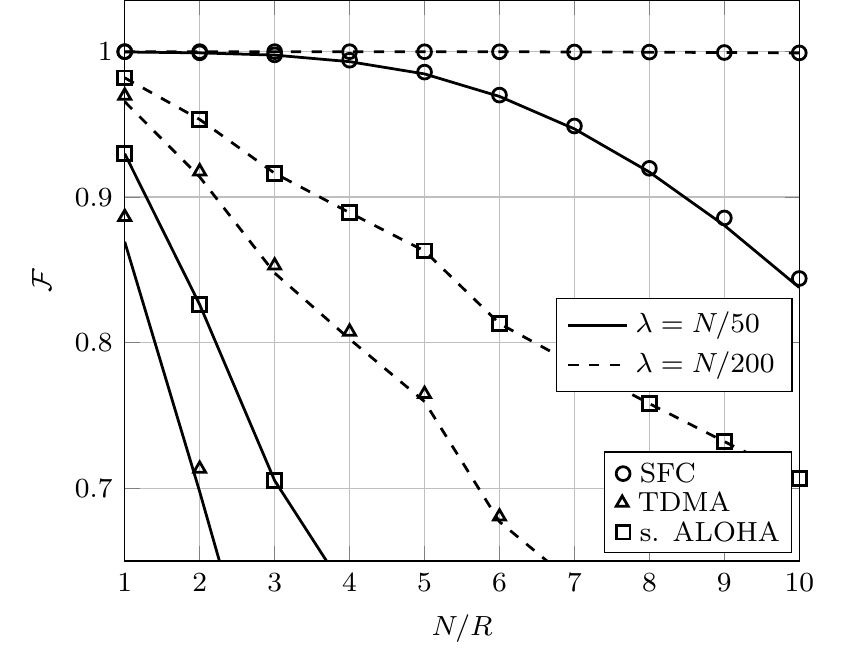}
    \caption{Efficiency $\mathcal{F}$ of TDMA, slotted ALOHA, and \ac{SFC} against $N/R$ for $R=11$.}
    \label{fig: Eff x N}
\end{figure}
\begin{figure}[t!]
    \centering
    \includegraphics[width=0.7\linewidth]{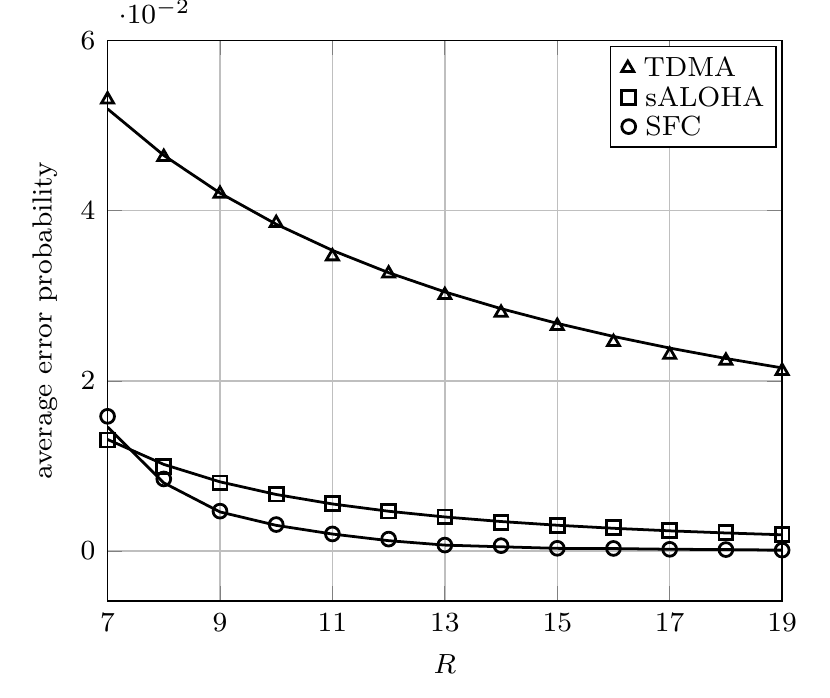}
    \caption{Average error probability of TDMA, slotted ALOHA, and \ac{SFC} against $R$ for $\lambda=0.2$, $k=6$, and $N=64$.}
    \label{fig: AEP x R}
\end{figure}

In Fig. \ref{fig: Eff x N}, the efficiency $\mathcal{F}$ for TDMA, slotted ALOHA, and \ac{SFC} is plotted against $N/R$.
Clearly, the \ac{SFC} system outperforms TDMA and slotted ALOHA. 
The behavior of the \ac{SFC} system is again remarkable, as in the entire operating region shown in Fig.  \ref{fig: Eff x N} the \ac{SFC} system is better than the TDMA and slotted ALOHA systems even with a higher traffic load (four times higher $\lambda$). 
Note that for $N/R=1$ the \ac{SFC} system has an efficiency $\mathcal{F}$ of 100\%, whereas TDMA and slotted ALOHA do not reach 100\% efficiency $\mathcal{F}$ in any of the simulated cases. 
This characteristic occurs owing to the possibility of two or more alarms occurring in the same sensor node within the period of a radio symbol.
When this happens, we have a collision for the TDMA and slotted ALOHA systems; however, the \ac{SFC} system can guarantee that sink receives the information correctly.

In Fig. \ref{fig: AEP x R}, the average error probability for \ac{TDMA}, slotted ALOHA, and \ac{SFC} is plotted against $R$ with $\lambda=0.2$, $k=6$, and $N=64$.
The markers represent simulated results, and the lines (analytical results) are obtained using \eqref{eq: prob error final}, \eqref{eq: prob error sALOHA}, and \eqref{eq: prob error TDMA}.
Clearly, the \ac{SFC} system outperforms TDMA and slotted ALOHA when $R>7$.
The TDMA system shows the worst performance.
Note how the error probability for the \ac{SFC} approaches 0 more quickly than for the other two schemes.
Again, it is evident how the SFC is able to more accurately estimate $\Theta$, notably in the case of a low resource allocation, i.e., $R<N$.



\section{Conclusion}\label{sec: Conclusion}
This paper introduced a novel approach to design communication systems, based on the meaning of the message to be transmitted and its end use.
We proposed a semantic-functional transmission of the state of sensors with respect to a predetermined event to be used in alarm systems.
Specifically, we proposed a random map to combine physical and MAC layers to constructively handle collisions with a low complexity.
Our numerical results demonstrated that the proposed \ac{SFC} achieves a transmission efficiency of 100\% for the proposed application in almost all the studied cases, outperforming the \ac{TDMA}- or slotted ALOHA-based systems in most of the scenarios evaluated.
This initial result will be extended to incorporate a more realistic environment that includes noise, fading, and other impairments of wireless communications, in which new types of mapping might be needed.
Besides, data from actual industrial IoT scenarios will be used to validate our framework when transmitting events related to fault detection \cite{dzaferagic2021fault}.

\section*{Acknowledgments}
This paper is partly supported by: Academy of Finland via (a) FIREMAN consortium n.326270 as part of CHIST-ERA grant CHIST-ERA-17-BDSI-003, (b) EnergyNet Fellowship n.321265/n.328869, (c) 6G Flagship program n. 346208, (d) Science Foundation Ireland under Grant number 13/RC/2077\_P2, (e) CNPq/Brazil (Grant Reference 311470/2021-1), (f) Fapemig (Grant No. APQ-00966-17), (g) São Paulo Research Foundation/Brazil (FAPESP) (Grant No. 2021/06946-0), (h) RNP, with resources from MCTIC/Brazil, Grant No. 01245.010604/2020-14, and (i) Brazil 6G project of the Radiocommunication Reference Center (\textit{Centro de Referência em Radiocomunicações} - CRR) of the National Institute of Telecommunications (\textit{Instituto Nacional de Telecomunicações} - Inatel), Brazil. We would like to thank Dr. Hanna Niemelä for proofreading this paper.

	\bibliographystyle{IEEEtran}
	\bibliography{IEEEabrv,References}

\end{document}